\documentclass[a4paper,fleqn]{cas-sc}

\usepackage[numbers]{natbib}

\usepackage[T1]{fontenc}
\usepackage{graphicx}
\usepackage[font=small,labelfont=bf]{caption}
\usepackage[font=small,labelfont=small]{subcaption}
\usepackage{amssymb,amsthm}
\usepackage[capitalise]{cleveref}
\usepackage{thm-restate}
\usepackage{apptools}
\usepackage{enumitem}
\usepackage{afterpage}

\graphicspath{{figures/}}

\newtheorem{theorem}{Theorem}

\newtheorem{proposition}[theorem]{Proposition}
\newtheorem{observation}[theorem]{Observation}
\newdefinition{remark}{Remark}

\usepackage{xcolor}
\definecolor{defblue}{rgb}{0.121,0.47,0.705}
\let\emph\relax
\DeclareTextFontCommand{\emph}{\color{defblue}\em}

\begin{document}
\let\WriteBookmarks\relax
\renewcommand{\topfraction}{1} %
\renewcommand{\bottomfraction}{1} %
\renewcommand{\textfraction}{0.01} %
\renewcommand{\floatpagefraction}{1} %

\shorttitle{Morphing Graph Drawings
	in~the~Presence~of~Point~Obstacles}    

\shortauthors{Firman, Hegemann, Klemz, Klesen, Sieper, Wolff, and Zink}  

\title [mode = title]{Morphing Graph Drawings
	in~the~Presence~of~Point~Obstacles}  

\tnotemark[1] 

\tnotetext[1]{Work partially supported by DFG grants WO~758/9-1 and WO~758/11-1.
	A preliminary version of this article appeared in the proceedings of
	SOFSEM 2024~\cite{fhkkswz-mgdppo-sofsem24}.} 

\author[1]{Oksana Firman}[orcid=0000-0002-9450-7640]
\author[1]{Tim Hegemann}[orcid=0009-0008-4770-3391]
\author[1]{Boris Klemz}[orcid=0000-0002-4532-3765]
\cormark[1]
\author[1]{Felix Klesen}[orcid=0000-0003-1136-5673]
\author[1]{Marie Diana Sieper}[orcid=0009-0003-7491-2811]
\author[1]{Alexander Wolff}[orcid=0000-0001-5872-718X]
\author[1]{Johannes Zink}[orcid=0000-0002-7398-718X]

\affiliation[1]{organization={Institut für Informatik, Universität
    Würzburg}, city={Würzburg}, country={Germany}}

\cortext[1]{Corresponding Author}
\cortext[0]{\includegraphics[height=8pt]{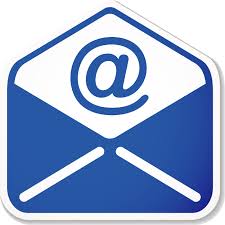}~\texttt{firstname.lastname@uni-wuerzburg.de}}

\begin{abstract}
A crossing-free morph is a continuous deformation between two graph drawings that preserves straight-line pairwise noncrossing edges.
Motivated by applications in 3D morphing problems, we initiate the study of morphing graph drawings in the plane in the presence of stationary point obstacles, which need to be avoided throughout the deformation.

As our main result, we prove that it is NP-hard to decide whether such
an obstacle-avoiding 2D morph between two given drawings of the same graph exists.
In fact, this statement remains true even in the severely restricted special case where only three vertices have to change positions.
This is in sharp contrast to
the classical case without obstacles, where there is an efficiently
verifiable (necessary and sufficient) criterion for the existence of a
morph.

Further, we provide several combinatorial results related to
conditions under which the existence of a morph between two drawings
of a graph can or cannot be prevented by the placement of a given
number of point obstacles.
\end{abstract}

\begin{keywords}
Graph morphing \sep Point obstacles \sep NP-hard \sep Planar graph
\end{keywords}

\maketitle

\section{Introduction}
\label{sec:intro}

In the field of Graph Drawing,
a \emph{morph} between two straight-line
drawings~$\Gamma_1$ and~$\Gamma_2$ of the same graph
is a continuous deformation that transforms~$\Gamma_1$ into~$\Gamma_2$
while preserving straight-line edges at all times.
A morph is \emph{crossing-free}\footnote{In the context of morphs restricted to the
plane $\mathbb R^2$, crossing-free morphs are usually called \emph{planar}.
Since we are also discussing work related to morphs in three dimensions, we are
using the more general term crossing-free in the beginning of our introduction.}
if the edges are pairwise noncrossing at all times.
Morphing (beyond the above, strict definition in
Graph Drawing) has applications in animation and
computer graphics~\cite{gomes1999warping}.
In this paper, we initiate the study of morphing graph drawings in the
presence of stationary point obstacles, which need to be avoided
throughout the motion.

\paragraph{Related work.}
An obvious necessary condition for the existence of a crossing-free morph in~$\mathbb R^2$ between two straight-line drawings~$\Gamma_1$ and~$\Gamma_2$ is that these drawings represent the same \emph{plane graph}
(i.e., a planar graph equipped with fixed combinatorial embedding and a distinguished outer face).
It has been established long ago~\cite{Cairns,Thomassen}
that this (efficiently verifiable) criterion is also sufficient, i.e.,
a crossing-free morph in~$\mathbb R^2$ between two straight-line drawings of the same plane graph always {\em exists}.

More recent work~\cite{DBLP:journals/siamcomp/AlamdariABCLBFH17,DBLP:conf/esa/Klemz21} focuses on efficient {\em computation} of such morphs.
In particular, this involves producing a discrete description of the continuous motion.
Typically, this is done in form of so-called piecewise linear morphs.
In a \emph{linear} morph, each vertex moves along a straight-line segment at a constant speed (which depends on the length of the segment) 
such that it arrives at its final destination at the end of the morph.
The (unique) linear morph between~$\Gamma_1$ and~$\Gamma_2$ is denoted by~$\langle \Gamma_1,\Gamma_2\rangle$.
A \emph{piecewise linear} morph is created by concatenating several linear morphs, which are referred to as \emph{(morphing) steps}.
A piecewise linear morph consisting of~$k$ steps can be encoded as a sequence of~$k+1$ drawings.
Alamdari et al.~\cite{DBLP:journals/siamcomp/AlamdariABCLBFH17} showed that two straight-line drawings
of the same $n$-vertex plane graph always admit a crossing-free piecewise linear morph in~$\mathbb R^2$
with~$O(n)$ steps, which is best-possible.
Their proof is constructive and corresponds to an $O(n^3)$-time algorithm,
which was later sped up to $O(n^2\log n)$ time by Klemz~\cite{DBLP:conf/esa/Klemz21}.

Other works are concerned with finding crossing-free morphs in~$\mathbb R^2$ between two given drawings while preserving 
certain additional properties, such as convexity~\cite{DBLP:conf/compgeom/AngeliniLFLPR15}, 
upward-planarity~\cite{DBLP:journals/algorithmica/LozzoBFPR20}, or
edge lengths\footnote{In the fixed edge length scenario, the drawings
	are also known as linkages.}
\cite{Connelly,alt2004complexity,connelly2010locked},
or with constructing crossing-free morphs in~$\mathbb R^2$ that transform a given drawing to achieve certain properties,
such as vertex visibilities~\cite{aichholzer2011convexifying} or convexity~\cite{DBLP:journals/comgeo/KleistKLSSS19},
while being in some sense monotonic, in order to preserve the so-called ``mental map''~\cite{DBLP:conf/gd/PurchaseHG06} of the viewer.

Quite recent works~\cite{DBLP:journals/jgaa/ArsenevaBCDDFLT19,agi-mtds3-JGAA23,befklow-mpgdt3d-CGT23}
are concerned with transforming two drawings~$\Gamma_1$ and~$\Gamma_2$ in the plane into each other
by means of crossing-free morphs in the space~$\mathbb R^3$.
Such \emph{2D--3D--2D} morphs are always
possible~\cite{befklow-mpgdt3d-CGT23}---even if~$\Gamma_1$
and~$\Gamma_2$ have different combinatorial embeddings---and
they sometimes require fewer morphing steps than morphs that are restricted to the plane~$\mathbb R^2$~\cite{DBLP:journals/jgaa/ArsenevaBCDDFLT19,agi-mtds3-JGAA23}.
Due to connections to the notoriously open \textsc{Unknot} problem,
3D--3D--3D %
morphs are not well understood and have, so far, only been considered for
trees~\cite{DBLP:journals/jgaa/ArsenevaBCDDFLT19,agi-mtds3-JGAA23}.

\paragraph{Our model and motivation.}
In this paper, we introduce and study a natural variant of the 2-dimensional morphing problem:
given two crossing-free straight-line drawings~$\Gamma_1$ and~$\Gamma_2$
as well as a finite set of points~$P$ in~$\mathbb{R}^2$, called \emph{obstacles}, %
construct (or decide whether there exists) a crossing-free morph in~$\mathbb R^2$ between~$\Gamma_1$ and~$\Gamma_2$ that \emph{avoids}~$P$.
During such a morph, the drawing is not allowed to intersect any of
the obstacles at any time point during the deformation.  The obstacles
remain stationary.

This problem arises naturally when constructing
2D--3D--2D morphs, where it is tempting to apply strategies for the
classical 2-dimensional case on a subdrawing induced by the subset of
the vertices contained in a plane~$\pi$.
Note that every edge between vertices on different sides of~$\pi$
intersects~$\pi$ in a point, which then acts as an obstacle for the
2-dimensional morph.

\paragraph{Conventions and notation.}
In the remainder of the paper, we consider only morphs in the
plane~$\mathbb R^2$.
We write ``drawing'' as a short-hand for ``straight-line drawing in
the plane~$\mathbb R^2$'' and, similarly, we write ``(\emph{planar})
morph'' rather than ``(crossing-free) morph in~$\mathbb R^2$''.  For
any positive integer~$n$, we define $[n]=\{1,2,\dots,n\}$.

\paragraph{Contribution, organization, and further terminology.}
Let $\Gamma_1$ and $\Gamma_2$ be two drawings of the same
plane graph~$G$, and let~$P$ be a set of obstacles.
We say that~$\Gamma_1$ and$~\Gamma_2$ are \emph{blocked} by $P$ if
there is no planar morph between~$\Gamma_1$ and~$\Gamma_2$ that
avoids~$P$ (see \cref{fig:simple-blocked}
for an example).
Moreover, $\Gamma_1$ and $\Gamma_2$ are \emph{blockable} if there
exists a set of obstacles that blocks them.

\begin{figure}[tb]
  \centering
  \includegraphics[page=2]{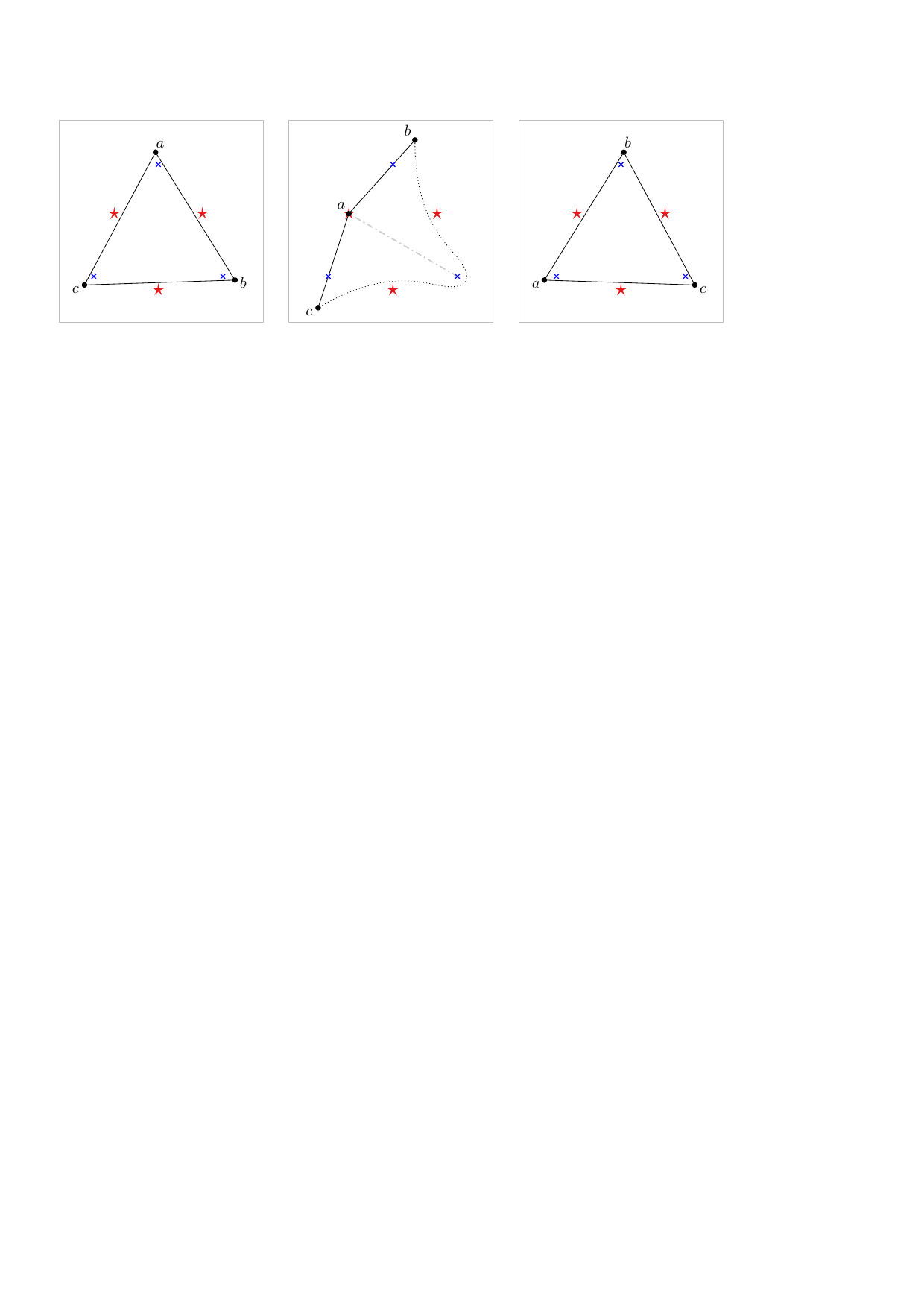}
  \caption{
  Two distinct drawings $\Gamma_1$ and $\Gamma_2$ of a plane cycle $(a,b,c)$ and a set $P$ consisting of three internal
obstacles (blue crosses) and three external obstacles (red stars), which are compatible with $\Gamma_1$ and $\Gamma_2$.
There exists no planar morph between $\Gamma_1$ and $\Gamma_2$ that avoids $P$, i.e., the drawings are blocked by $P$. (For a formal justification, see \cref{prop:block-label-shift-C3-2}.)
}
  \label{fig:simple-blocked}
\end{figure}

We investigate conditions under which drawings can or
cannot be blocked.
In \cref{subsec:not-blockable},
we provide several configurations that cannot be blocked.
We start by observing that drawings of graphs without
cycles (that is, forests) cannot be blocked; see \cref{lem:trees}.
Thus, we concentrate mostly on the case when~$G$ may contain cycles.
\begin{figure}[tb]
	\centering
	\begin{subfigure}{0.45 \linewidth}
		\centering
		\includegraphics[page=2]{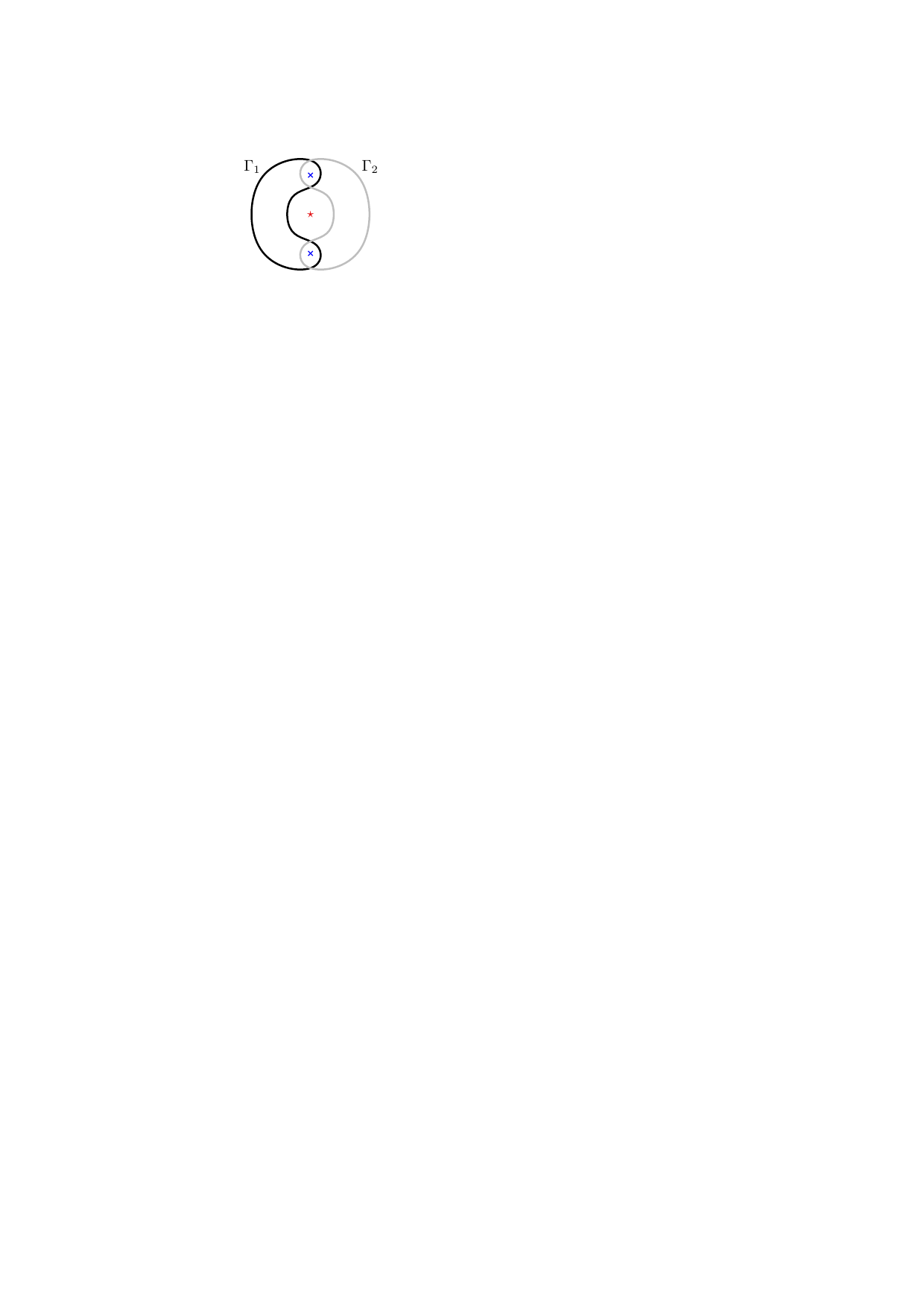}
		\caption{}
		\label{fig:homotopy1}
	\end{subfigure}
	\hfill
	\begin{subfigure}{0.45 \linewidth}
		\centering
		\includegraphics[page=3]{homotopy}
		\caption{}
		\label{fig:homotopy2}
	\end{subfigure}
	\caption{(a)~Two simple closed curves~$\Gamma_1$
		and~$\Gamma_2$ both containing the two obstacles marked
		by blue crosses and neither containing the obstacle
		marked by a red star.  The curves~$\Gamma_1$
		and~$\Gamma_2$ cannot be continuously deformed into
		each other without passing over one of the obstacles
		and while preserving simplicity.
		(The corresponding continuous deformation of the geodesic
		joining the two blue internal obstacles within the closed
		curve would transform the curve~$g_1$ into~$g_2$ while
		keeping the endpoints fixed and without passing over the
		red external obstacle, which is impossible.)
		Consequently, the two drawings of a 4-cycle
		in~(b) are blocked by the set of three obstacles;
		this set is
		not compatible with the two drawings.}
	\label{fig:homotopy}
\end{figure}
Recall that an obvious necessary condition for the existence of a
planar morph between two drawings is that they represent the same
plane graph.
Interpreting the obstacles in the set~$P$ as (isolated) vertices
reveals that a planar morph between $\Gamma_1$ and $\Gamma_2$ that
avoids $P$ is possible only if each obstacle $p\in P$ is located in
the same face in~$\Gamma_1$ and~$\Gamma_2$.
However, as \cref{fig:homotopy} shows, this condition is not sufficient.
We say that~$P$ is \emph{compatible} with~$\Gamma_1$ and~$\Gamma_2$ if there is a continuous deformation that transforms~$\Gamma_1$ into~$\Gamma_2$
while avoiding~$P$ and preserving pairwise noncrossing ({\em not necessarily straight-line}) edges at all times.
The compatibility of~$P$ with~$\Gamma_1$ and~$\Gamma_2$ is obviously a necessary condition for the existence of a planar obstacle-avoiding morph.
This condition can be checked efficiently~\cite[Theorem~2]{cm14};
note that it is violated in \cref{fig:homotopy2}.
We emphasize that, since the deformation in the definition of compatibility is not
required to maintain straight-line 
edges, our necessary condition is of topological rather than geometric nature.
(We remark that in the field of topology, a plane minus a finite set of obstacle points is often called a ``punctured'' plane. In particular, this term is used in the provided reference~\cite{cm14}.)
We show that two obstacles are never enough to block two
drawings with which the obstacles are compatible, regardless of the
represented graph; see \cref{prop:1or2-obstacles}.

Compatibility is unfortunately not sufficient for the existence
of obstacle-avoiding morphs---even if the considered graph is just a
(3-)cycle.  We study this case in more detail.  Let~\emph{$C_n$} denote the
simple cycle with $n$ vertices.  Let~$\Gamma$ and~$\Gamma'$ be
drawings of a plane~$C_n$ such that (i)~$\Gamma$ and~$\Gamma'$ are
distinct (as mappings of $C_n$ to the plane), but (ii)~the closed
curves realizing~$\Gamma$ and~$\Gamma'$ are identical, and (iii)~the
set of points of~$\mathbb R^2$ used to represent vertices is the same
in~$\Gamma$ and~$\Gamma'$.  Note that there exists an offset
$o \in [n-1]$ such that, for every $i \in [n]$, vertex~$i$ in~$\Gamma$
is at the same location as vertex $i+o$ (modulo $n$) in~$\Gamma'$.
Therefore, we say that~$\Gamma'$ is a \emph{shifted version}
of~$\Gamma$, and we say that $\Gamma$ and $\Gamma'$ are \emph{shifted
  drawings} of~$C_n$.  Due to~(ii), {\em every} set of obstacles is
compatible with~$\Gamma$ and~$\Gamma'$.
We conclude \cref{subsec:not-blockable} by showing that multiple internal {\em and}
external obstacles are required to block drawings of a plane~$C_3$ (\cref{prop:block-label-shift-C3-two-inner,prop:block-label-shift-C3-one-outer})
and tighten these lower bounds on the number of
obstacles needed to block drawings of~$C_3$ in
\cref{subsec:cycle-3} by providing an example of two drawings, which are in fact shifted versions of one another (\cref{prop:block-label-shift-C3-2}).
In \cref{subsec:cycle-n}, we turn our attentions to cycles of arbitrary length.
We show that a set of seven obstacles can block a morph between a
drawing $\Gamma$ of $C_n$ and any shifted version $\Gamma'$ of~$\Gamma$; see
\cref{prop:block-label-shift-in-even-cycles}.
On the other hand, we state a sufficient condition for the existence
of planar obstacle-avoiding morphs between shifted drawings of $C_n$.
We call a degree-2 vertex in a drawing \emph{free} if its two incident
edges lie on a common line.  We show that if there is a free vertex in
a drawing $\Gamma$ of $C_n$, a morph between $\Gamma$ and its shifted
version cannot be blocked (regardless of the number of obstacles); see
\cref{prop:free-vertex-label-shift}.
Free vertices are helpful in other specific cases as well (in
particular, they play a crucial role in the upcoming \cref{thm:NP-hard*}), but their usefulness is limited in general: their
existence is not a sufficient condition for the existence of
obstacle-avoiding morphs even when it comes to (nonshifted drawings of) plane cycles; see
\cref{fig:dense-fox}.
\begin{figure}[tb]
	\centering
	\includegraphics[page=1]{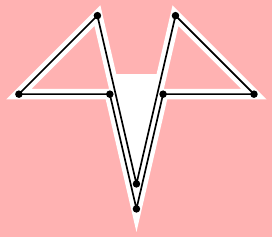}\qquad\includegraphics[page=2]{fox}
	\caption{Two drawings of $C_8$.
		If the (red) shaded regions are densely filled with obstacles, the drawing on the left is essentially locked in place---it cannot be morphed planarly to a substantially different drawing without intersecting the obstacle regions.
		In particular, it cannot be morphed to the drawing on
                the right, even though the right drawing contains two
                free vertices (and the obstacles are compatible with
                the two drawings).}
	\label{fig:dense-fox}
\end{figure}

As our main result, we show that even if the compatibility condition is
satisfied, it is NP-hard to decide whether an obstacle-avoiding planar morph
exists (see \cref{sec:hard}).
In fact, this statement remains true even in the severely restricted special case where only three vertices have to change positions:

\begin{restatable}{theorem}{Hardness}
	\label{thm:NP-hard*}
	Given a plane graph $G$, a set of obstacles $P$,
	and two crossing-free straight-line drawings $\Gamma$ and $\Gamma'$
        of~$G$ in~$\mathbb R^2$, it is NP-hard to
	decide whether there exists an obstacle-avoiding crossing-free morph in $\mathbb R^2$ between~$\Gamma$ and~$\Gamma'$. 
	The problem remains NP-hard when restricted to the case where $G$ is connected,
	the drawings $\Gamma$ and $\Gamma'$ are identical except for the
	positions of three
	vertices, and the obstacles $P$ are compatible
	with $\Gamma$ and $\Gamma'$.  (These statements hold regardless of
	whether the morph is required to be piecewise linear or not.)
\end{restatable}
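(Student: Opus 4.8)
The plan is to establish NP-hardness via a polynomial-time reduction from a known NP-hard problem, most naturally a geometric or scheduling-flavored problem that captures the combinatorial essence of ``which vertex can move past which obstacle, and in what order.'' A good candidate is a variant of \textsc{3-SAT} or \textsc{Planar 3-SAT}, since the obstacles give us a clean way to encode binary choices: whether a moving vertex passes an obstacle on the left or on the right corresponds to setting a variable true or false. First I would design a \emph{variable gadget} consisting of one (or a constant number of) moving vertices together with a configuration of obstacles so narrowly placed that the vertex is forced through a ``channel'' admitting exactly two homotopy classes of paths around a distinguished obstacle; the morph must commit to one of them, encoding a Boolean value. Because the theorem promises hardness even when $\Gamma$ and $\Gamma'$ differ in only four vertex positions, the real work is to make a \emph{single} global morph of just four vertices simulate the conjunction of many local constraints, so the gadgets must be threaded together geometrically rather than by using many independent moving vertices.

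The key steps, in order, would be: (1) fix the source of hardness and its planar/structural restrictions so that the resulting obstacle layout stays realizable in the plane; (2) build the variable gadget enforcing a binary homotopy choice, using the compatibility condition from \cref{sec:prelim} to guarantee that \emph{some} legal morph exists locally, so that infeasibility only ever arises from global incompatibility of the choices; (3) build a \emph{clause gadget} in which a region becomes traversable only if at least one of its incident variable-literals is set to the satisfying value, typically by arranging obstacles so that a bottleneck can be cleared only when the relevant vertex has vacated it by passing on the correct side; (4) wire the gadgets together into one connected plane graph $G$ with obstacle set $P$ and two drawings $\Gamma,\Gamma'$ differing in the positions of exactly four vertices, arguing that $\Gamma$ morphs to $\Gamma'$ avoiding $P$ if and only if the \textsc{SAT} instance is satisfiable. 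Finally (5) I would verify that the construction is compatible in the sense of \cref{sec:prelim} and that the reduction is polynomial, and check that both directions of the equivalence hold regardless of whether piecewise-linear morphs are required, so that the parenthetical claim in the theorem follows for free.

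The hard part, I expect, will be step (3)–(4): forcing a global obstruction out of purely local binary choices while keeping the number of \emph{moving} vertices down to four. With unboundedly many moving vertices one could encode each variable by its own vertex, but compressing everything into four moving vertices means the Boolean information must be carried by the \emph{order} and \emph{homotopy type} of the paths a few vertices trace through a dense forest of stationary obstacles. I anticipate needing a scheduling-style argument: the morph's feasibility reduces to the existence of a consistent ordering in which the moving vertices clear the bottleneck regions, and I would have to prove that a satisfying truth assignment yields a valid schedule (hence an explicit obstacle-avoiding morph, piecewise linear if desired) while any valid morph induces, via its homotopy data, a satisfying assignment. Establishing the reverse direction rigorously---that \emph{every} obstacle-avoiding morph, not merely a nicely structured one, must respect the gadget constraints---is the most delicate part, since one must rule out clever morphs that exploit the full continuous freedom of the motion rather than the discretized behavior the gadgets were designed around.
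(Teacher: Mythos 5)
Your proposal starts from the right reduction source (\textsc{3-SAT}) and the right general idea (stationary obstacles forming channels that encode binary choices), but it has two genuine gaps. First, you misread the four-vertex restriction: the theorem only requires that $\Gamma$ and $\Gamma'$ \emph{differ} in the positions of four vertices, not that only four vertices \emph{move} during the morph. The paper exploits this heavily: its construction has $\Theta(nm)$ vertices, essentially all of which move at some point --- decision vertices in variable gadgets, vertices pushed along tunnels in split, crossing, literal, and clause gadgets --- but all of them return to their initial positions, and only the four vertices of a single ``synchronization'' 4-cycle end up elsewhere. Your plan to compress all Boolean information into the homotopy types of the traces of just four moving vertices is a far harder problem, and there is no indication it can be made to work; in particular, a clause gadget would then have to test a property of the \emph{history} of the motion rather than a spatial property of a single intermediate drawing, which is exactly what obstacle gadgets cannot do.

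Second, and relatedly, you have no mechanism for \emph{synchronization}, which the paper explicitly identifies as the intricate part of the construction. A morph is a process in time; even if each clause can individually be witnessed as satisfied at some moment, nothing in your outline forces all clauses to be satisfied at the \emph{same} moment, nor does anything in the final drawing $\Gamma'$ record whether this ever happened --- a morph could satisfy the clauses one at a time, undoing each variable choice before making the next, without any satisfying assignment existing. The paper solves this with a 4-cycle drawn as a trapezoid in $\Gamma$ and as a shifted version of the same trapezoid in $\Gamma'$ (so that, as a point set, the two drawings coincide and every obstacle set is automatically compatible); the shift can be realized only if, at one instant, every clause gadget has cleared a vertex out of a shared corridor, which requires a satisfying assignment to be displayed across all variable gadgets simultaneously. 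Note also that since the obstacles must be compatible with $\Gamma$ and $\Gamma'$, hardness cannot come from a homotopy obstruction alone --- it must come from the straight-line requirement --- so the gadgets need to exploit the rigidity of straight-line drawings (tunnels of forbidden area, free vertices, anchored cycles) rather than only the homotopy classes of vertex traces that your variable gadget relies on.
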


To prove \cref{thm:NP-hard*}, 
we had to overcome the somewhat intricate challenge of designing gadgets that behave in a synchronous way.
We conclude the paper by discussing several open questions in \cref{sec:conclusion}.

We remark that there is no meaningful difference between piecewise linear morphs and ``general'' morphs~-- as long as one is not interested in bounding the number of morphing steps (which we are not).
Thus, all of our positive results also hold when insisting on piecewise linear morphs.

\begin{observation}\label{obs:piecewise}
If there exists a planar morph between two given drawings~$\Gamma_1$ and~$\Gamma_2$, then
there also exists a piecewise linear planar morph between~$\Gamma_1$ and~$\Gamma_2$
(with possibly many steps).
The same statement is true for obstacle-avoiding planar morphs since the region
in which the drawing is allowed to move is open (i.e., the drawing is
not allowed to intersect the obstacle points).
\end{observation}

\section{Non-Blockable Configurations}
\label{subsec:not-blockable}

In this section, we describe several configurations (combinations of graph classes
and numbers of obstacles) that always allow planar obstacle-avoiding morphs.
First, we show that trees are not blockable regardless of the number of obstacles.

\begin{proposition}
	\label{lem:trees}
	Let $\Gamma_1$ and $\Gamma_2$ be two drawings of the same plane forest~$F$.
	Then $\Gamma_1$ and $\Gamma_2$ are not blockable.
\end{proposition}

\begin{proof}
	We describe how to construct a morph that avoids an arbitrary finite set of obstacles.
	Assume for now that~$F$ consists of a single tree~$T$, which we root at an arbitrary vertex~$r$.
	We construct an obstacle-avoiding planar morph from~$\Gamma_1$ to a drawing~$\Gamma_1'$
	located in a disk that is centered on the position of~$r$ in~$\Gamma_1$,
	contains no obstacles, and whose radius is smaller than the distance~$\gamma$ between any pair of obstacles.
	This can be done by ``contracting'' the tree along its edges
        in a bottom-up fashion (i.e., starting from the leaves); we
        discuss this process in more detail below.
	We also morph~$\Gamma_2$ into an analogously defined drawing~$\Gamma_2'$.
	The drawings can now be translated (far enough) away from the obstacles without intersecting them
	so that they can be transformed into each other by means of
	morphing techniques for the classical non-obstacle
        case~\cite{DBLP:journals/siamcomp/AlamdariABCLBFH17,DBLP:conf/esa/Klemz21}. %
        The combination of these three morphs yields the desired morph from~$\Gamma_1$ to~$\Gamma_2$.
	
	If~$F$ contains multiple trees, it is easy to
        augment~$\Gamma_1$ and~$\Gamma_2$ to drawings of the same
        plane tree by inserting additional vertices and edges, thus,
        reducing to the case of a single tree.
	
	\paragraph{Contraction.}
	It remains to discuss the aforementioned contraction step.
	To this end, it suffices to prove the following statement.
	
	Let~$v$ be a vertex of~$T$ (which is rooted at~$r$), and
        let~$T(v)$ be the subtree of~$T$ rooted at~$v$.
	Further, let~$\Gamma$ be a drawing of~$T$, and let~$\varepsilon$ be a positive number such that the 
	disk~$B(v,\varepsilon)$ of radius~$\varepsilon$ centered on~$v$ contains no obstacle and no vertex 
	other than~$v$ and intersects only edges that are incident to~$v$.
	Then there is a planar obstacle-avoiding morph from~$\Gamma$ to a drawing~$\Gamma^v$ of~$T$ in which
	the subdrawing of~$T(v)$ is located in~$B(v,\varepsilon)$.
        Further, $v$ and all vertices that do not belong to~$T(v)$ are
        placed exactly as in~$\Gamma$.
	
	Note that, indeed, by chosing~$\varepsilon < \gamma$
        and~$\Gamma=\Gamma_1$, the drawing~$\Gamma^r$ corresponds to
        the desired drawing~$\Gamma_1'$.
	
	We prove the statement by induction on the height of~$T(v)$.
	If~$v$ is a leaf, the statement follows trivially.
	So suppose~$v$ is not a leaf.
	Let~$c_1,c_2,\dots,c_k$ be the children of~$v$.
	For every~$i \in [k]$, we choose a radius~$\delta_i\ll
        \varepsilon$ such that the disk~$B(c_i,\delta_i)$ contains
        no vertex other than~$c_i$, intersects only edges incident
        to~$c_i$, and can be translated in
        direction~$\overrightarrow{c_iv}$ until it is fully contained
        in~$B(v,\varepsilon)$ without intersecting any obstacles or
        edges of~$\Gamma$ that are not incident to vertices
        of~$T(c_i)$.
	Moreover, we choose these radii small enough such that when
        carrying out these translations simultaneously, the
        disks~$B(c_i,\delta_i)$ do not intersect each other.
	Clearly, such a set of radii exists.
	Now we can inductively contract each subtree $T(c_i)$ into
        its disk~$B(c_i,\delta_i)$ and simultaneously apply the
        described translations to obtain the desired morph to the
        drawing~$\Gamma^v$.
\end{proof}

Now we show that two obstacles are not enough to block two
drawings (with which the obstacles are compatible), regardless of the
represented graph.
The main idea of the proof is as follows:
we interpret the obstacles as (isolated) vertices of our graph and use
previous results to determine an ``ordinary'' planar morph (in which
the obstacles move).
We then transform this morph into the desired obstacle-avoiding morph by
translating,
rotating, and scaling the frame of reference as time goes on to ensure that
the obstacles become fixed points (one can think of this as moving, rotating,
and zooming
the camera in a suitable fashion).

\begin{proposition}
	\label{prop:1or2-obstacles}
	Let~$\Gamma_1$ and~$\Gamma_2$ be two drawings of the
	same plane graph~$G$, and let~$P$ be a set of obstacles
	that are compatible with~$\Gamma_1$ and~$\Gamma_2$.
	If $|P|\leq 2$, then there exists a planar morph
	from~$\Gamma_1$ to~$\Gamma_2$ that avoids~$P$.
\end{proposition}

\begin{proof}
	We show the statement for a set $P = \{p_1, p_2\}$ of size~2.
	The case $|P| = 1$ can be treated as the case $|P| = 2$ by
        placing a second obstacle in the vicinity of the first one.

	Without loss of generality, we assume that $d(p_1, p_2) = 1$.
	Let~$G'$ be the plane graph obtained from $G$ by introducing two new
	isolated vertices~$v_1$ and~$v_2$ and assigning them into the faces
	where the obstacles $p_1$ and~$p_2$ lie in $\Gamma_1$ (and thus also
	in~$\Gamma_2$).  Let~$\Gamma'_1$ and~$\Gamma'_2$ be the drawings
	of~$G'$ that are obtained from $\Gamma_1$ and $\Gamma_2$,
	respectively, by placing~$v_1$ and~$v_2$ at the positions of~$p_1$
	and~$p_2$, respectively.
	
	We know that there exists a morph~$M_1$ from~$\Gamma'_1$
	to~$\Gamma'_2$~\cite{DBLP:journals/siamcomp/AlamdariABCLBFH17,DBLP:conf/esa/Klemz21}. %
	We can treat $M_1$ as a
	continuous function from the interval $[0,1]$ onto the set of
	straight-line drawings of~$G'$ such that $M_1(0) = \Gamma'_1$ and
	$M_1(1) = \Gamma'_2$.  Since $M_1$ is continuous, the function
	$s \colon [0, 1] \rightarrow \mathbb{R}^2$ that describes the
	position of~$v_1$ in $\mathbb{R}^2$ relative to its starting point
	during the morph $M_1$ is continuous as well, and since~$v_1$ is
	drawn at the same position in~$\Gamma'_1$ and~$\Gamma'_2$, we know
	that $s(0) = s(1) = (0,0)$.  We now define a new morph~$M_2$ that is 
	obtained from $M_1$ by translating $M_1(x)$ by the vector $-s(x)$ 
	for every $x\in[0,1]$.
	Then $M_2$ is a continuous morph from~$\Gamma'_1$
	to~$\Gamma'_2$, and, for every $x \in [0,1]$, in the drawing $M_2(x)$, 
	the vertex $v_1$ is placed at $p_1$.
	
	In the same manner, we define the function
        $t \colon [0,1] \to \mathbb{R}_{>0}$ that describes the
        distance between $v_1$ and $v_2$ in $M_2(x)$ for every
        $x\in [0,1]$.  Since $d(p_1, p_2) = 1$, we know that
        $t(0) = t(1) = 1$, and $t$ is continuous since~$M_2$ is
        continuous.  We define a new morph~$M_3$ that is 
	obtained from $M_2$ by scaling $M_2(x)$ by the factor $1/t(x)$
        around the point~$v_1$ for every $x\in[0,1]$.
	Then $M_3$ is a continuous morph from~$\Gamma'_1$ to~$\Gamma'_2$
	that keeps~$v_1$ fix, and the distance between~$v_1$
        and $v_2$ is equal to 1 in $M_3(x)$ for every $x\in[0,1]$.
	
	We define a third function
        $\rho \colon [0,1] \rightarrow \mathbb{R}_{>0}$ that describes,
        for every $x\in [0,1]$, in~$M_2(x)$
        the angle between the horizontal ray that emanates from~$v_1$
        and points to the right and the ray from~$v_1$ to~$v_2$.
        Since $v_1$
        and $v_2$ are in the same positions in $\Gamma_1'$ and
        $\Gamma_2'$, we know that $\rho(0) = \rho(1) = 0$, and $\rho$
        is continuous since $M_3$ is continuous.  We define a new
        morph~$M_4$ that is obtained from $M_3$ by rotating $M_3(x)$
        around $v_1$ by an angle of $-\rho(x)$ for every $x\in[0,1]$.
        Then $M_4$ is a continuous morph from~$\Gamma'_1$
        to~$\Gamma'_2$, where $v_1$ is fix, the distance between~$v_1$
        and~$v_2$ equals~1 in $M_4(x)$ for every $x\in[0,1]$, and the
        the ray from~$v_1$ to~$v_2$
        does not change its slope during the morph.  Thus,
        $v_2$, too, is fix under~$M_4$.  Then, if we restrict $M_4$ to
        $G$, we obtain a morph from $\Gamma_1$ to $\Gamma_2$ that
        avoids $P$, which proves the statement for $|P| = 2$.
\end{proof}

In the remainder of this section, we focus on the case where the graph is a 3-cycle.
In particular, in the following two propositions, we show that a set of
compatible obstacles that blocks drawings of~$C_3$ has to contain
multiple internal {\em and} multiple external obstacles.  As we will
show in \cref{subsec:cycle-3}, the bounds on the number of internal
and external obstacles in these propositions are best-possible.	

\begin{proposition}
	\label{prop:block-label-shift-C3-one-outer}
	Let $\Gamma_1$ and $\Gamma_2$ be drawings of a plane $C_3$, and let
	$P$ be a set of obstacles compatible with~$\Gamma_1$ and~$\Gamma_2$.
	If at most one obstacle of~$P$ lies in the
	exterior of $\Gamma_1$ (and $\Gamma_2$), then there is a planar
	morph from $\Gamma_1$ to~$\Gamma_2$ that avoids~$P$.
\end{proposition}

\begin{proof}
Our strategy is as follows:
we first define a set of canonical drawings, then 
show how to transform any two canonical drawings into each other
and, finally, describe how each of the
given drawings can be transformed into a canonical drawing.
The combination of these three morphs yields the desired morph between
$\Gamma_1$ and $\Gamma_2$.

We describe our approach assuming that there is an external obstacle
(if this not the case, one can simply pick an arbitrary point in the exterior
of both drawings and, interpreting this point as an (external) obstacle, execute the
same strategy).

\paragraph{Canonical drawings.}
The following construction is illustrated in \cref{fig:canonical}.
Let~$o=(x_o,y_o)$ denote the external obstacle and its coordinates
and let $I$ denote the set of internal obstacles.
Without loss of generality, we may assume that $x_o$ is strictly smaller
than the x-coordinate of all internal obstacles.
(Note that $o$ lies in the exterior of the convex hull of $I$ and, hence, there exists
a line that separates~$o$ from $I$.
By applying a suitable rotation of the plane, we may assume this line to be
vertical and $o$ to be to its left.)
Let~$r_{\mathrm{above}}$ be a ray that originates at~$o$ and lies above
all internal
obstacles.
Symmetrically, let~$r_{\mathrm{below}}$ be a ray that originates at~$o$
and lies below all internal obstacles.
Note that the direction vector of both rays has a positive x-coordinate.
Let $a=(x_a,y_a)$, $b=(x_b,y_b)$, and $c=(x_c,y_c)$ be points (and their
coordinates) such that
\begin{itemize}
\item $x_a$ is strictly smaller
than the x-coordinate of all internal obstacles, but strictly larger than~$x_o$;
\item $x_b$ is strictly larger
than the x-coordinate of all internal obstacles;
\item $x_c=x_b$;
\item $a$ lies below~$r_{\mathrm{above}}$ and
above~$r_{\mathrm{below}}$;
\item $b$ lies above~$r_{\mathrm{above}}$ and such that the line through
$a$ and $b$ is above all obstacles; and
\item $c$ lies below~$r_{\mathrm{below}}$ and such that the line through
$a$ and $c$ is below all obstacles.
\end{itemize}
We use~$\ell_{ab}$ to denote the line through~$a$ and~$b$;
the lines~$\ell_{bc}$ and~$\ell_{ac}$ are defined analogously.
Note that all internal obstacles are located in the interior of the triangle~$abc$ while~$o$ is located in the exterior of~$abc$.
More precisely, the external obstacle~$o$ is located above~$\ell_{ab}$
and below~$\ell_{ac}$.

Any drawing of our plane 3-cycle in which the vertices are placed at $a$,
$b$, and~$c$ is called \emph{canonical}.
Thus, there are three distinct canonical drawings (taking into account shifted
versions).

\begin{figure}[tbp]
	\centering
	\includegraphics{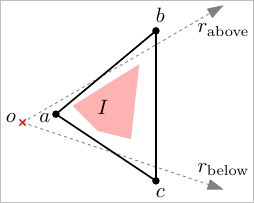}
	\caption{Construction of the canonical triangle.}
	\label{fig:canonical}
\end{figure}

\paragraph{Morphing between canonical drawings.}
Let~$\Gamma$ be a canonical drawing with vertices $v,w,u$ at $a,b,c$,
respectively.
We show how to morph to a shifted (canonical) drawing of~$\Gamma$.
(By repeating the described strategy once more, one can obtain a morph to the third
canonical drawing.)
The process is visualized in \cref{fig:rotation}.

Let~$p_v,p_u$ be points such that
\begin{itemize}
\item $p_v$ lies on~$\ell_{ac}$ and its x-coordinate is strictly smaller
than~$x_a$;
\item $p_u$ lies on~$\ell_{ab}$ and its x-coordinate is strictly smaller
than~$x_a$; and
\item the lines $\ell_{p_vb}$ and $\ell_{p_uc}$ through $p_v,b$ and $p_u,c$,
respectively, intersect in a point whose x-coordinate is strictly larger
than~$x_b(=x_c)$.
\end{itemize}
We use $p_w$ to denote the intersection point of $\ell_{p_vb}$ and $\ell_{p_uc}$.

\begin{figure}[tbp]
	\centering
	\includegraphics{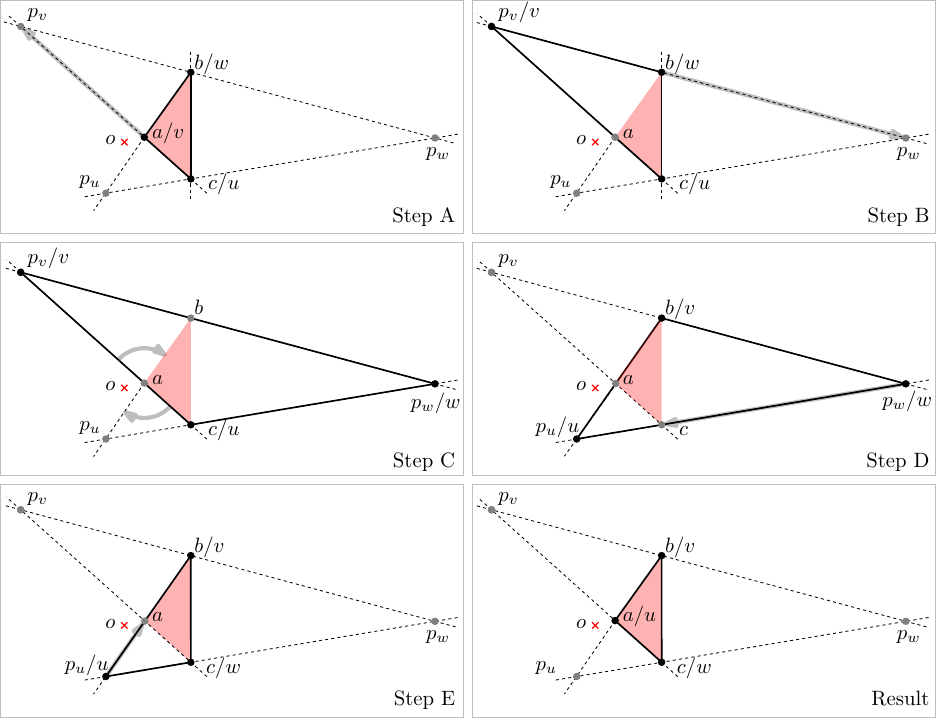}
	\caption{A morph of a canonical drawing of a 3-cycle with
          vertices $u$, $v$, $w$ placed at $c$, $a$, $b$,
          respectively, to a shifted version where the vertices are
          placed at $a$, $b$, $c$, respectively.  In the subfigures,
          we used the notation $x/y$ to mean that vertex~$y$ is placed
          at point~$x$.}
	\label{fig:rotation}
\end{figure}

\afterpage{\clearpage}

We proceed in five simple steps.
It is readily observed from properties of the triangle~$abc$ and the definitions of~$p_v,p_u,p_v$ that all of these
steps are obstacle-avoiding and planar.

	\vspace{-1.2ex}
	\begin{enumerate}[label={Step \Alph*.},leftmargin=*]
		\itemsep-0.2ex
		\item
		We use a linear morph to move~$v$ to~$p_v$.		
		\item
		We use a linear morph to move~$w$ to~$p_w$.	
		\item
		We simultaneously move $u$ to~$p_u$ along~$up_u$
		and~$v$ to~$b$ along~$p_vb$ and do so in a way where
		the intersection point of the edge~$uv$ with~$\ell_{ab}$ remains at~$a$.
		(Essentially, we rotate $uv$ around fixpoint~$a$ and simultaneously
		change its length to ensure that~$u$ and~$v$ move along the desired segments.
		We remark that this is not necessarily a linear morph.)
		\item
		We use a linear morph to move~$w$ to~$c$.
		\item
		We use a linear morph to move~$u$ to~$a$.
	\end{enumerate}
	\vspace{-1.2ex}

\paragraph{Morphing to canonical drawings.}
Let~$\Gamma\in \{\Gamma_1,\Gamma_2\}$.
Let~$\triangle$ be a triangle that has all internal vertices in its interior
and~$o$ in its exterior.
We describe an obstacle-avoiding planar morph that transforms~$\Gamma$
such that its vertices lie at the corners of~$\triangle$
(in particular, this means we can morph to a canonical drawing).
The process is visualized in \cref{fig:morph-to-canonical}.

\begin{figure}[pt]
	\centering
	\includegraphics{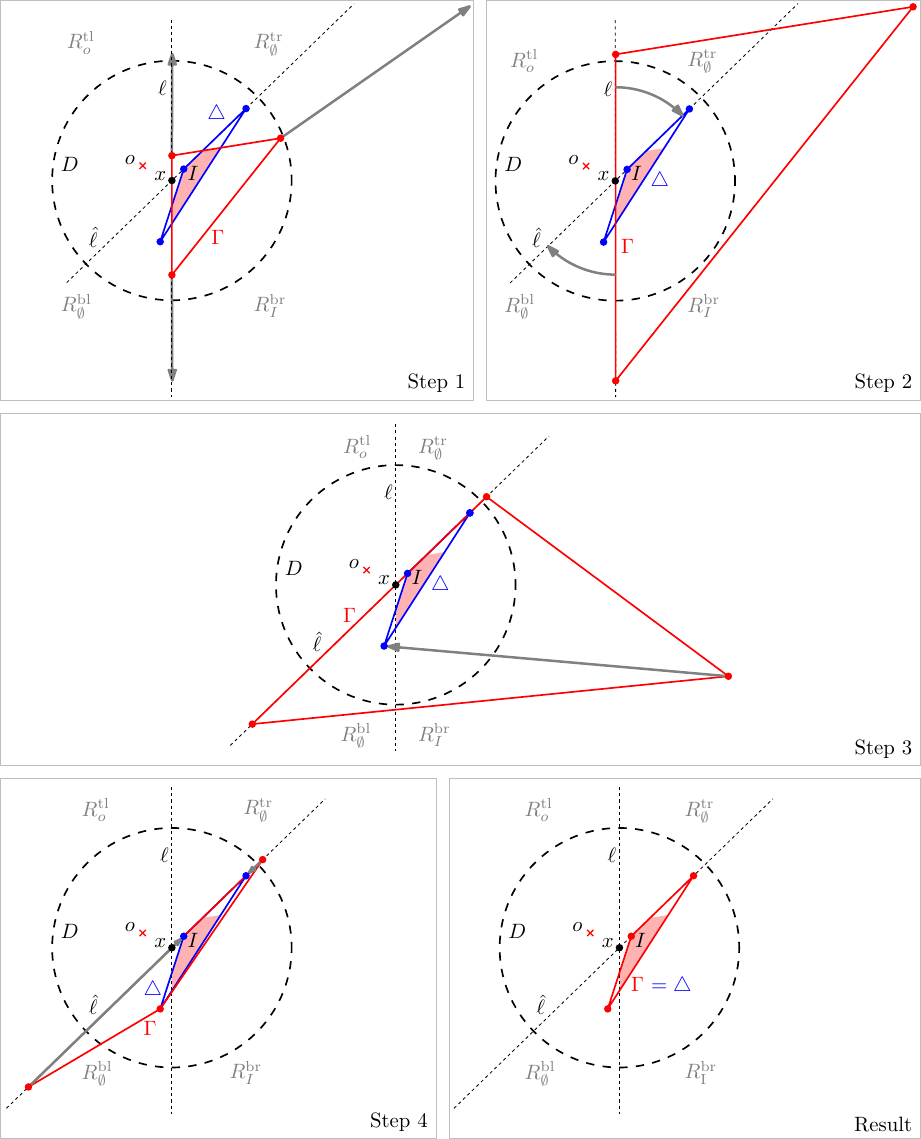}
	\caption{A morph transforming $\Gamma$ so that its vertices lie on the corners of a triangle $\triangle$. }
	\label{fig:morph-to-canonical}
\end{figure}

\afterpage{\clearpage}

Let~$e$ be an edge of~$\Gamma$ such that its supporting
line~$\ell$
has the property that~$o$ is on the side of~$\ell$ that
does not contain the internal obstacles.
Define an edge (side)~$\hat e$ of~$\triangle$ and its supporting line~$\hat \ell$
analogously.
We may assume that~$\ell$ and~$\hat \ell$ are not parallel, otherwise
we can simply use a morph to slightly move one of the endpoints of~$e$
to ensure this is the case (this can obviously be done in an obstacle-avoiding planar fashion).
Further, we may assume without loss of generality (by rotating and
/ or reflecting the plane if necessary) that
$\ell$ is vertical,
$o$ is to the left of~$\ell$, and
$o$ is above~$\hat \ell$.
By construction, the lines~$\ell$ and~$\hat \ell$ partition the plane into four regions,
which we denote as follows:
\begin{itemize}
\item
$R_o^{\mathrm{tl}}$ is to the left of~$\ell$, above~$\hat\ell$,
and contains~$o$;
\item
$R_\emptyset^{\mathrm{tr}}$ is to the right of~$\ell$, above~$\hat\ell$,
and contains no obstacles;
\item
$R_I^{\mathrm{br}}$ is to the right of~$\ell$, below~$\hat\ell$,
and contains all internal obstacles; and
\item
$R_\emptyset^{\mathrm{bl}}$ is to the left of~$\ell$, below~$\hat\ell$,
and contains no obstacles.
\end{itemize}
Let~$x$ denote the point where~$\ell$ and~$\hat\ell$ cross.
Let~$D$ be a disk centered on~$x$ that contains~$\Gamma$
and~$\triangle$ in its interior.%
Let~$u$ and~$v$ be the top and bottom endpoint of~$e$, respectively.
Let~$w$ denote the remaining vertex of~$\Gamma$
(which lies to the right of~$\ell$).
We proceed in several steps:

\vspace{-1.2ex}
\begin{enumerate}[label={Step \arabic*},leftmargin=*]
\itemsep-0.2ex
\item
Let~$\ell_{uw}'$ be a line that is parallel to the edge~$uw$ and that
lies above~$D$.
Symmetrically,
let~$\ell_{vw}'$ be a line that is parallel to the edge~$vw$ and that lies
below~$D$.
Let~$u'$ denote the point where~$\ell$ and~$\ell_{uw}'$ cross.
Let~$v'$ denote the point where~$\ell$ and~$\ell_{vw}'$ cross.
Finally, let~$w'$ denote the point where~$\ell_{uw}'$ and~$\ell_{vw}'$
cross.
We use a single linear morph to move~$u$ to~$u'$, $v$ to~$v'$,
and~$w$ to~$w'$.
Note that this morph ``expands'' our drawing, that is, the intial
drawing is contained in the interiors of the final drawing and all intermediate drawings.
Hence, it avoids the internal obstacles.
Moreover, the drawing remains in the half-plane to the right of~$\ell$
and, thus, the external obstacle~$o$ is also avoided.
\item
Note that the edges~$uw$ and~$vw$ are now in the exterior of~$D$.
We use a morph
to rotate our drawing clockwise around fixpoint~$x$ until the edge~$uv$
lies on~$\hat\ell$.
During this rotation, the edges~$uw$ and~$vw$ remain in the exterior
of~$D$ and, thus, avoid all internal obstacles (which all lie in the
interior of~$D$).
Moreover, the edges~$uw$ and~$vw$ never enter the region~$R_o^{\mathrm{tl}}$ and, thus avoid the external obstacle~$o$.
The supporting line of the edge~$uv$ covers exactly the regions
$R_\emptyset^{\mathrm{tr}}$ and $R_\emptyset^{\mathrm{bl}}$
during the rotation.
Thus, the edge~$uv$ also avoids all obstacles.
\item
We use a linear morph to move~$w$ to the corner of~$\triangle$
that does not lie on~$\hat\ell$.
It is easy to see that this morph is obstacle-avoiding and planar.
\item
We use a linear morph to move~$u$ to the right corner of~$\triangle$
on~$\hat\ell$ and~$v$ to the left corner of~$\triangle$ on~$\hat\ell$.
It is easy to see that this morph is obstacle-avoiding and planar.
\end{enumerate}
\vspace{-1.2ex}

		\paragraph{Wrap-up.}
	We have described how to transform both $\Gamma_1$ and $\Gamma_2$ into canonical drawings and how canonical drawings can be transformed into each other.
	In combination, this yields the desired morph.
\end{proof}

\begin{proposition}
	\label{prop:block-label-shift-C3-two-inner}
	Let $\Gamma_1$ and $\Gamma_2$ be drawings of a plane~$C_3$,
        and let~$P$ be a set of obstacles compatible with~$\Gamma_1$
        and~$\Gamma_2$.
	If at most two obstacles of~$P$ lie in the interior of
	$\Gamma_1$ (and $\Gamma_2$), then there is a planar
	morph from $\Gamma_1$ to~$\Gamma_2$ that avoids~$P$.
\end{proposition}

\begin{proof}
	Our strategy is similiar to the one used in \cref{prop:block-label-shift-C3-one-outer}:
	we construct a rectangle that contains both internal obstacles and that
	is in the interior of both drawings.
	We then describe how each of our drawings can be transformed such that its vertices lie on the boundary of the rectangle.
	Finally, we describe how to morph between the resulting two drawings.
	The combination of these three morphs yields the desired morph
        between~$\Gamma_1$ and~$\Gamma_2$.
        
        We describe our approach assuming that there are exactly two internal
        obstacles.
        If there is only one internal obstacle, we can pick an arbitrary point interior to
        both drawings and, interpreting this point as an additional (internal) obstacle,
        execute
        the same strategy.
       Similarly, if there is no internal obstacle, we can simply shrink the drawing~$\Gamma_1$
	(by moving all of its vertices close to some point in its interior),
	use a sequence of translations to move it
	into the interior of~$\Gamma_2$ without intersecting any obstacles,
	pick two arbitrary points in the interior of both drawings, and, interpreting these points as (internal) obstacles, execute our strategy for two internal obstacles.

	We begin by introducing some notation.
	
	\paragraph{Setup.}
	For illustrations of the notation introduced in this paragraph, refer to \cref{fig:triangle-4-obstacles-2}.
	We let $p_a=(x_a,y_a)$ and $p_b=(x_b,y_b)$ denote the internal
        obstacles and their coordinates.
	Without loss of generality, we assume that
 $x_a=x_b$ and $y_a>y_b$.
	Choose an~$\varepsilon >0$ such that the rectangle
	$R = [x_a-\varepsilon,x_a+\varepsilon] \times
	[y_b-\varepsilon,y_a+\varepsilon]$ is contained in the interior of both~$\Gamma_1$ and~$\Gamma_2$.
	Let $\ell$, $m$, and $r$ denote the vertical lines with
        x-coordinates $x_\ell=x_a-\varepsilon$, $x_m=x_a$, and
        $x_r=x_a+\varepsilon$, respectively.
	Let $\Gamma \in \{\Gamma_1, \Gamma_2\}$, and let $u=(x_u,y_u)$,
	$v=(x_v,y_v)$, and $w=(x_w,y_w)$ denote (the coordinates of) its vertices.
	Note that~$\Gamma$ intersects~$m$ in exactly two points.
	In other words, one of its three edges, say~$uv$, lies completely in one of the two closed half-planes bounded by~$m$ and the other two edges each intersect~$m$ in a single point (possibly an endpoint).
	By suitably reflecting the plane and relabeling obstacles and / or vertices if necessary, we may assume that
	\begin{itemize}
	\item if one endpoint of $uv$ lies on~$m$, then this endpoint is~$v$;
\item $uv$ lies in the closed half-plane to the left of~$m$, i.e., $x_u< x_m$, $x_v\le x_m$ (and, consequently $x_r<x_w$); and
\item the edge $vw$ intersects~$m$ above~$R$ and the edge $uw$ intersects~$m$ below~$R$.
\end{itemize}

	\begin{figure}[!tbh]
		\centering
		\includegraphics[page=2]{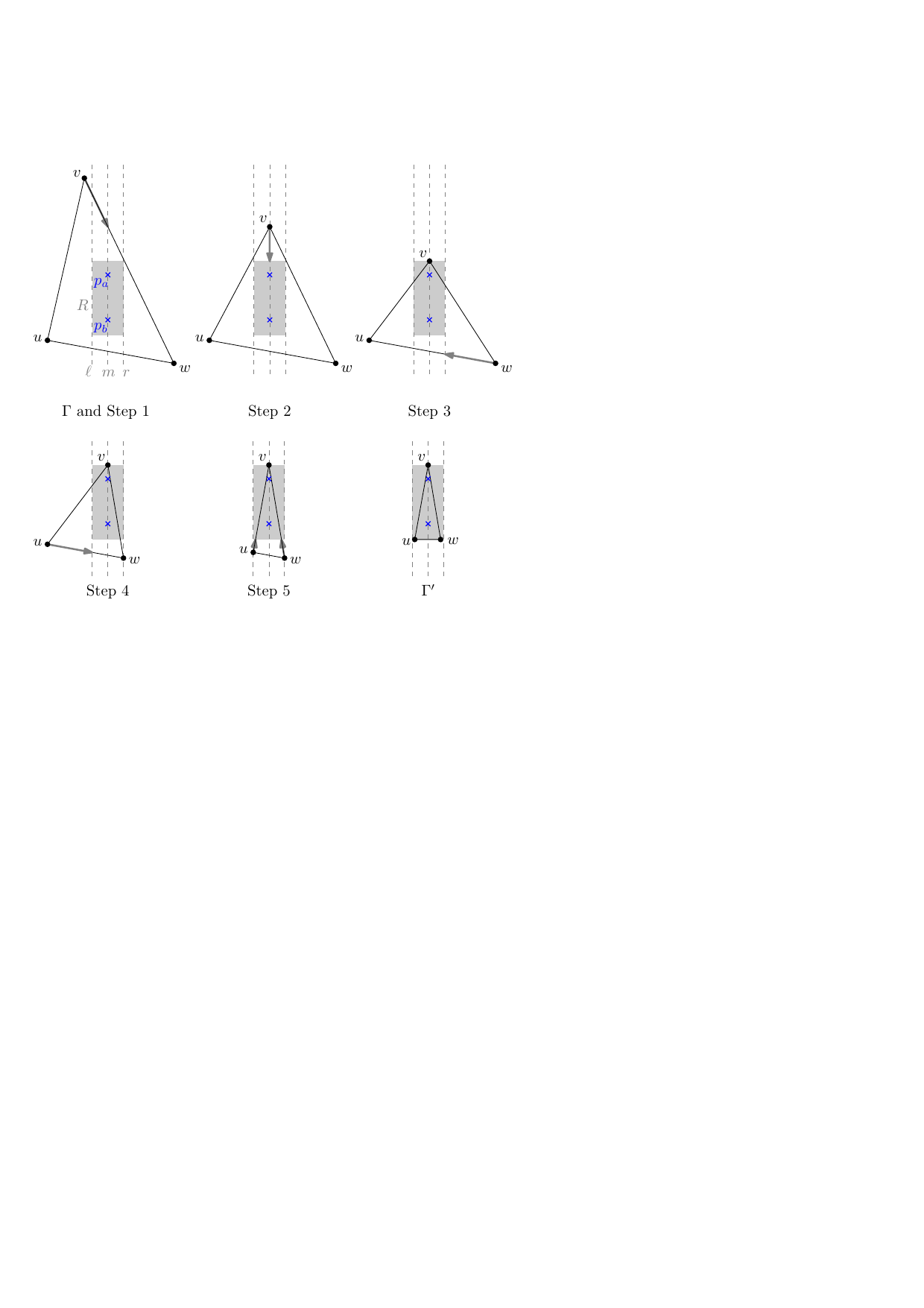}
		\caption{
		Our strategy for morphing the vertices of~$\Gamma$
		onto the boundary of the rectangle~$R$ in the proof of
		\cref{prop:block-label-shift-C3-two-inner}.
		The resulting morph avoids the exterior of the initial drawing
		(shaded) entirely
		and can hence be applied regardless of the number and
		placement of the external obstacles.}
		\label{fig:triangle-4-obstacles-2}
	\end{figure}

\paragraph{Moving the vertices onto the boundary of the rectangle.}
We now describe an obstacle-avoiding planar morph that transforms
$\Gamma$ into a drawing whose vertices lie on the boundary of~$R$.
It consists of several basic steps, which are illustrated in
\cref{fig:triangle-4-obstacles-2}:

	\vspace{-1.2ex}
	\begin{enumerate}[label={Step \arabic*.},leftmargin=*]
		\itemsep-0.2ex
		\item
		Move $v$ along $vw$ to the intersection of $vw$ with $m$ (which is above~$R$).
		Note that the (triangular) convex hull of the old and new drawing of $uv$ (this is the region covered by $uv$ during this morph) does not contain any obstacle or $w$, so this morph is planar and obstacle-avoiding.
\item Move $v$ downwards along $m$ to the top boundary of~$R$.
Note that the (triangular) convex hull of the old and new drawing of $uv$ does not contain any obstacle or $w$ and the (triangular) convex hull of the old and new drawing of $vw$ does not contain any obstacle or $u$, so this morph is planar and obstacle-avoiding.
\item Move $w$ along $uw$ to the intersection of $uw$ with $r$ (which is below $R$).
Note that the (triangular) convex hull of the old and new drawing of $vw$ does not contain any obstacle or $u$, so this morph is planar and obstacle-avoiding.
\item Move $u$ along $uw$ until it lies in the closed vertical slab bounded by~$\ell$ and~$m$; it then lies below $R$ (we remark that it is possible for $u$ to lie in this slab to begin with).
Note that the (triangular) convex hull of the old and new drawing of $uv$ does not contain any obstacle or $w$, so this morph is planar and obstacle-avoiding.
\item Finally, move $u$ along $uv$ and $w$ along $vw$ until they reach the bottom segment of~$R$.
Note that the (quadrangular) convex hull of the old and new drawing of $uw$ does not contain any obstacle or $v$, so this morph is planar and obstacle-avoiding.
	\end{enumerate}
	\vspace{-1.2ex}
	
	Let~$\Gamma'$ denote the result drawing.
	
	\paragraph{Morphing to and between canonical drawings.}
	We say that a drawing of our plane 3-cycle is \emph{canonical} if its vertices
	are placed either at
	$R_{\mathrm{bl}}=(x_b-\varepsilon,y_b-\varepsilon)$,
	$R_{\mathrm{br}}=(x_b+\varepsilon,y_b-\varepsilon)$, and
	$R_{\mathrm{tm}}=(x_a,y_a+\varepsilon)$, or, symmetrically, at
	$R_{\mathrm{tl}}=(x_a-\varepsilon,y_a+\varepsilon)$,
	$R_{\mathrm{tr}}=(x_a+\varepsilon,y_a+\varepsilon)$, and
	$R_{\mathrm{bm}}=(x_b,y_b-\varepsilon)$.
	Taking into account shifted drawings, there are thus $6$ canonical drawings in total.
	
	It is easy to transform our drawing~$\Gamma'$ into a canonical one:
	vertex~$v$ is already located at $R_{\mathrm{tm}}$ and
	the other two vertices are located on the bottom segment of~$R$ (with~$u$ to the left of~$w$).
	Thus, we can simply move $u$ to $R_{\mathrm{bl}}$ and $w$ to $R_{\mathrm{br}}$ along the bottom segment of~$R$.
	Moreover, we can now easily transform the resulting canonical drawing~$\Gamma''$ into any other canonical drawing:
	first, we slightly increase the x-coordinate of~$v$ while maintaining the property that the line through $u$ and $v$ is above the obstacles $p_a$ and $p_b$.
	Second, we move~$w$ to $R_{\mathrm{bm}}$.
	Third, we move~$u$ to $R_{\mathrm{tl}}$.
	Finally, we move~$v$ to $R_{\mathrm{tr}}$.
	In each step, the (triangular) convex hull of the old and new version of redrawn edge does not contain any obstacles or the nonincident vertex, so all these morphing steps are planar and obstacle-avoiding.
	By repeating (a symmetric version of) this strategy, we can reach any canonical drawing, as claimed.
	
	\paragraph{Wrap-up.}
	We have described how to transform both $\Gamma_1$ and $\Gamma_2$ into canonical drawings and how canonical drawings can be transformed into each other.
	In combination, this yields the desired morph.
\end{proof}

\afterpage{\clearpage}

\section{A Tight Lower Bound on the Number of Obstacles to Block Drawings of 3-Cycles}
\label{subsec:cycle-3}

In this section, we establish a tight lower bound on the number
 of obstacles needed
to block (shifted) drawings of~$C_3$.

\begin{proposition}
	\label{prop:block-label-shift-C3}
	Let $\Gamma_1$ and $\Gamma_2$ be drawings of a plane $C_3$, and let
	$P$ be a set of obstacles compatible with~$\Gamma_1$ and~$\Gamma_2$.
	If $\Gamma_1$ and $\Gamma_2$ are blocked by~$P$,
	then~$P$ contains at least five obstacles
	(three of which are internal and two of which are external).
\end{proposition}

\begin{proof}
Follows immediately from \cref{prop:block-label-shift-C3-two-inner,prop:block-label-shift-C3-one-outer}.
\end{proof}

\begin{proposition}
	\label{prop:block-label-shift-C3-2}
	Let~$\Gamma$ be a drawing of~$C_3$, and let~$\Gamma'$ be a shifted
	version of~$\Gamma$.  Then $\Gamma$ and $\Gamma'$ are blockable by
	five obstacles (three of which are internal and two of which
    are external) compatible with~$\Gamma$ and~$\Gamma'$.
    
    If $\Gamma$ is an isosceles triangle where the base has length~$s$, then it suffices to place
    the internal obstacles at a distance of at most~$s/12$ next to the three vertices and
    the external obstacles at a distance of at most~$s/12$ next to the midpoints of the two legs.
\end{proposition}

\begin{proof}
        Let $a$, $b$, and $c$ be the vertices of~$C_3$.  We put three
        obstacles $a'$, $b'$, and $c'$ in the inner face of~$\Gamma$
        in the vicinity of the corresponding vertices, i.e., inside a
        disk of radius $\varepsilon > 0$, where $\varepsilon$ is much
        smaller than the length of any edge of~$\Gamma$; see
        \cref{fig:triangle-label-shift}.  We put the other two
        obstacles~$d'$ and~$e'$ in the outer face of~$\Gamma$, namely
        in the $\varepsilon$-vicinity of the midpoints of the edges~$ac$ and~$ab$,
        respectively.  We show by contradiction that this set of
        obstacles blocks any morph between~$\Gamma$ and~$\Gamma'$.
	
	\begin{figure}[tb]
		\centering
		\includegraphics[page=5]{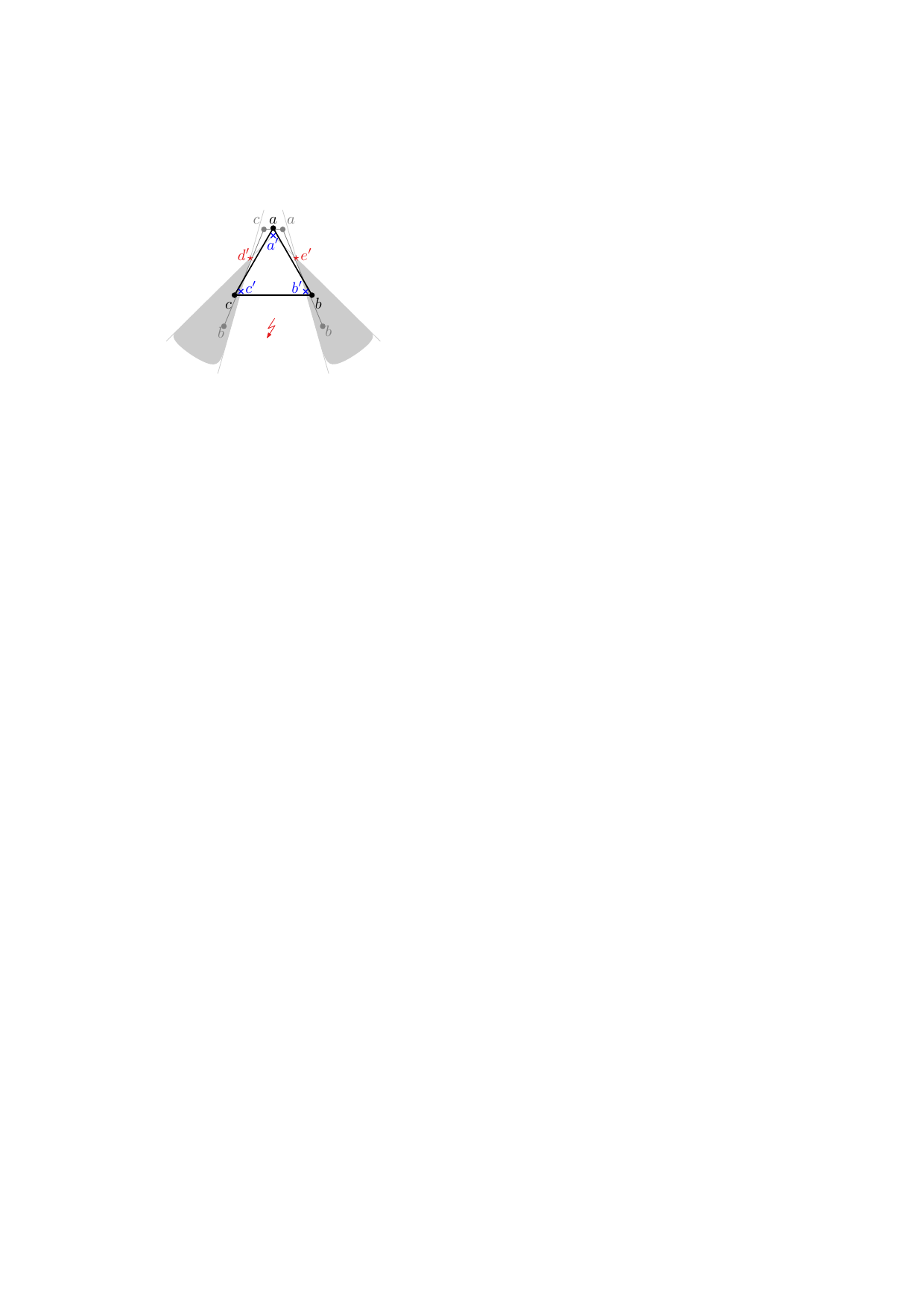}
		\caption{Five obstacles suffice to block shifted
			versions of~$C_3$.}
		\label{fig:triangle-label-shift}
	\end{figure}
	
	Without loss of generality, we may assume that $a$ is the
        vertex of~$C_3$ with the largest y-coordinate in~$\Gamma$,
        that~$b$ lies on the right side of the vertical line
        through~$a$, and that~$c$ lies on the left side.  We may also
        assume that $\varepsilon$ is chosen so small that the line
        $c'd'$ has positive slope and the line $b'e'$ has negative
        slope.  This implies that~$b'$ lies on the right side and that
        $c'$ lies on the left side of the vertical line through~$a$.
        
        Now suppose, for a contradiction, that there is a morph
        from~$\Gamma$ to a shifted version $\Gamma'$ such that $c$ is
        the vertex with largest y-coordinate in~$\Gamma'$.
        Note that during this morph no edge of the triangle
        $\triangle abc$ may intersect the triangle $\triangle a'b'c'$
        because $\triangle abc$ must always contain~$a'$, $b'$,
        and~$c'$.  Hence, on the way from its initial to its
        final position, vertex~$c$ must go around $\triangle a'b'c'$
        either clockwise (near~$d'$) or counterclockwise (near~$e'$).
        We assume that $c$ goes clockwise; the other case is
        symmetric.

        Consider the intermediate drawing~$\tilde\Gamma$ where $c$
        lies on the normal of the edge~$ac$ through~$d'$ in~$\Gamma$.
        Let~$\tilde a$, $\tilde b$, and $\tilde c$ be the positions
        of~$a$, $b$, and~$c$ in $\tilde\Gamma$, respectively.  On the
        normal of~$ac$, $\tilde c$ must lie between~$d'$ and the
        intersection point with~$a'c'$.
        For the edge~$\tilde c \tilde a$ not to
        intersect $\triangle a'b'c'$, $\tilde a$ must lie above the
        line~$\tilde ca'$; see \cref{fig:triangle-label-shift}.  To
        ensure that $b'$ lies inside
        $\triangle \tilde a \tilde b \tilde c$, the point~$\tilde b$
        must be below the line $b'c'$.  Similarly, the point
        $\tilde a$ must lie above the line $a'e'$ to ensure that~$e'$
        lies outside $\triangle \tilde a \tilde b \tilde c$.  Hence
        $\tilde a$ must lie inside the wedge~$A$ with apex~$a'$ that
        is the intersection of the two halfplanes above the
        lines~$\tilde ca'$ and~$e'a'$.  Similarly, from the point
        of view of~$\tilde c$, the point~$\tilde b$ must lie in the
        wedge~$C$ with apex~$c'$ that is the intersection of the two
        halfplanes below the line~$b'c'$ and the halfplane to the left
        of the line~$c'\tilde c$.  To ensure that~$e'$ lies outside
        $\triangle \tilde a \tilde b \tilde c$, however, the
        point~$\tilde b$ must lie to the right of the line~$b'e'$.
        This forces $\tilde b$ to lie in the
        wedge~$B$ with apex~$b'$, that is, the intersection of the
        halfplane below~$b'c'$ and the halfplane to the right
        of~$e'b'$.  Hence $\tilde b$ must lie in $B \cap C$.  However,
        the apices~$b'$ and~$c'$ of~$B$ and~$C$, respectively, lie on
        different sides of the vertical line through~$a$.  Moreover,
        the left boundary of~$B$ is part of the line~$b'e'$, whose
        slope we assumed to be negative, and the slope of the right
        boundary of~$C$ is less than the slope of the line~$c'd'$,
        which we assumed to be positive.  Therefore, $B$ lies
        completely to the right of the vertical line through~$a$,
        whereas~$C$ lies completely to the left of this line.

	Hence, the intermediate drawing $\tilde\Gamma$ cannot exist,
	and $\Gamma$ and $\Gamma'$ are	blocked by the set of obstacles.
    
    \begin{figure}[t]
        \centering
        \includegraphics[page=3]{triangle-5-obstacles}
        \caption{If we place the obstacles as described in
            the proof of \cref{prop:block-label-shift-C3-2}
            in $\varepsilon$ distance around an isosceles triangle
            with side lengths~$s$,
            then it suffices to set $\varepsilon = s/12$ to ensure
            that the line segment~$c'd'$ has positive slope.}
        \label{fig:epsilon-is-large}
    \end{figure}
    
    Finally, we argue about the precise placement of the obstacles stated
    above if $\Gamma$ is an isosceles triangle whose base has length~$s$.
    To this end, we show how to choose $\varepsilon$ sufficiently small.
    Without loss of generality, we assume that $bc$ is the base and
    $bc$ is drawn horizontally in $\Gamma$.
    The only requirement to $\varepsilon$ is that the lines $c'd'$ and $b'e'$
    have a positive and a negative slope, respectively.
    If we set $\varepsilon = s/12$, it is easy to see that these
    requirements are fulfilled:
    if we divide the drawing into vertical strips of width $s/12$, then
    $c'$ is in a strip to the left of~$d'$, and
    $e'$ is in a strip to the left of $b'$; see \cref{fig:epsilon-is-large}.
\end{proof}

\section{Shifted Versions of Cycles with and without Free Vertices}
\label{subsec:cycle-n}

In this section, we investigate (shifted) drawings cycles of arbitrary length.
We start by showing that it is not always possible to morph between a drawing
of $C_n$ and a shifted version of this drawing in an obstacle-avoiding fashion.

\begin{proposition}
	\label{prop:block-label-shift-in-even-cycles}
	Let $n \ge 6$ be an even integer.  Then there exists a
	drawing~$\Gamma$ of~$C_n$ such that, for every shifted
	version~$\Gamma'$ of~$\Gamma$, the drawings $\Gamma$ and $\Gamma'$ are
	blockable by seven obstacles that are compatible with~$\Gamma$
	and~$\Gamma'$.
\end{proposition}

\begin{proof}
	Let $\Gamma$ be a drawing of $C_n$ that it is wrapped around an
	imaginary equilateral triangle~$\triangle$ formed by the positions
	of the first three vertices of~$C_n$; see 
	\cref{fig:blocked-cycle-7-obstacles}.  We place~$\triangle$
	such that its bottom edge is horizontal.  We start with the
	first vertex $v_1$ of~$C_n$ in the lower right corner of~$\triangle$.
	Then, in $n/2$ steps, we go clockwise around (the corners of)~$\triangle$.
	In the $i$-th step, we place~$v_{i+1}$ and $v_{n-i+1}$ near the currently considered corner
	of~$\triangle$ such that the two new vertices see their predecessors
	(and will see their successors) and lie slightly further away from
	the barycenter of~$\triangle$ than the other vertices that we have
	placed near the same corner of~$\triangle$.  We can place the
	vertices of~$C_n$ (and four of the obstacles, see below)
	such that each of them lies within a
	radius-$\varepsilon$ disk centered at one of the corners
	of~$\triangle$, for some $\varepsilon>0$.  These
	radius-$\varepsilon$ disks correspond to the shaded green areas in the
	schematic drawing in \cref{fig:blocked-cycle-7-obstacles}.
	
	\begin{figure}[tbh]
		\centering
		\includegraphics{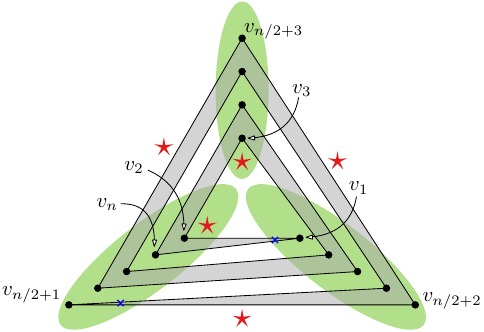}
		\caption{A schematic drawing of $C_n$ with vertices
			$v_1,\dots,v_n$ (here $n=14$) that cannot be morphed to a
			shifted version of itself in a planar
			way while avoiding the two internal obstacles (blue crosses) and
			the five external obstacles (red stars).
            In the actual drawing, each set of vertices and obstacles
            contained in the same shaded green area is placed within a
            radius-$\varepsilon$ disk (for some $\varepsilon>0$) centered
            at a corner of an equilateral triangle with unit edge length.
		}
		\label{fig:blocked-cycle-7-obstacles}
	\end{figure}
	
	We place two obstacles in the interior of~$C_n$, namely in the
	immediate vicinity of~$v_1$ and~$v_{n/2+1}$ (see the blue crosses in
	\cref{fig:blocked-cycle-7-obstacles}).  Additionally, we place five
	obstacles in the complement of~$C_n$ (see the red stars in
	\cref{fig:blocked-cycle-7-obstacles}).  First, we place two of these
	in the immediate vicinity of~$v_2$ and~$v_3$
	(closer to $v_2$ and $v_3$ than to $v_n$ and $v_{n-1}$, respectively).
	Then, we place the
	remaining three obstacles very close to the midpoints of the three
	outermost edges of~$C_n$ (that is, $v_{n/2+1}v_{n/2+2}$,
	$v_{n/2+2}v_{n/2+3}$, and $v_{n/2+3}v_{n/2+4}$).
	Due to the tight wrapping of~$C_n$ around~$\triangle$ and the extra
	obstacles on each of the three sides of the drawing,
	the drawing of $C_n$ is constrained within a triangular strip 
	of width~$\varepsilon$ around $\triangle$.
	Therefore no edge can change its slope (measured as the angle
        it forms with, say, the positive x-axis) by more than
        $O(\varepsilon)$ degrees.  Hence no edge can be turned such that it
        aligns with one of its incident edges, which would mean to
        change its slope by roughly $120^{\circ}$.
	Hence, $\Gamma$ and $\Gamma'$ are blocked 
	for every shifted version $\Gamma'$ of $\Gamma$.
\end{proof}

Observe that it is plausible that
\cref{prop:block-label-shift-in-even-cycles} can be strengthened:
even three obstacles (the two internal and one external in the
bottom left corner of the middle part of the drawing; see
\cref{fig:blocked-cycle-7-obstacles}) seem to be sufficient
for blocking two shifted drawings of an even-length cycle (we chose
to use seven obstacles to simplify the proof).
In view of this remark, it seems that
\cref{prop:1or2-obstacles} is best-possible.
In particular, the proof strategy does not generalize to three
obstacles as affine transformations preserve the cyclic
orientations of point triples.

Now we show that the presence of a free vertex is sufficient for the
existence of planar obstacle-avoiding morphs between shifted
drawings of a cycle.

\begin{proposition}
	\label{prop:free-vertex-label-shift}
	Let~$\Gamma$ be a drawing of~$C_n$, and let~$\Gamma'$ be a shifted
	version of~$\Gamma$.  If~$\Gamma$ contains a free vertex, then
	$\Gamma$ and~$\Gamma'$ are not blockable by obstacles that are
	compatible with~$\Gamma$ and~$\Gamma'$.
\end{proposition}

\begin{proof}
	Let $v_0,v_1\dots,v_{n-1}$ be the vertices of $C_n$ in clockwise
	order around $\Gamma$ such that $v_0$ is the free vertex.  We
	move~$v_0$ immediately next to~$v_1$
	such that $v_1$ lies on the line segment~$v_0v_2$.
    As a result, $v_1$ is free and can be moved very close
    to~$v_2$.  We iterate this move for every vertex until $v_0$
    becomes free again.  Finally, we move, for $i \in [n-1]$,
    $v_i$ very slightly to the former position of~$v_{i+1}$
    and~$v_0$ to, say, the midpoint of the line segment~$v_{n-1}v_1$.
    Together, all these moves represent
    a morph from~$\Gamma$ to a shifted version~$\Gamma_1$ of
	$\Gamma$, where, for every $i \in [n-2]$, vertex $v_i$ has moved to
	the position of its successor~$v_{i+1}$ and vertex $v_{n-1}$ has
	moved to the position of~$v_1$.  Repeating the whole procedure as
	often as necessary yields a morph from $\Gamma$ to any shifted
	version of $\Gamma$.
\end{proof}

\section{NP-Hardness (Proof of \cref{thm:NP-hard*})}
\label{sec:hard}

In this section, we show our main result.

\Hardness*
\label{thm:NP-hard}

\begin{proof}
	We reduce from the classical NP-hard problem \textsc{3-SAT}.
	Given a Boolean formula $\Phi = \bigwedge_{i=1}^m c_i$ in conjunctive
        normal form over variables $x_1, x_2, \dots, x_n$ whose clauses
	$c_1, c_2, \dots, c_m$ consist of three literals each, we construct
	a plane graph~$G$, two planar
	drawings~$\Gamma$ and~$\Gamma'$ of~$G$, and a set~$P$ of obstacles
	that are compatible with~$\Gamma$ and~$\Gamma'$ such that there
	exists an obstacle-avoiding planar morph from $\Gamma$ to $\Gamma'$
	if and only if $\Phi$ is satisfiable.
	We proceed with a high-level overview of our construction and then discuss the individual aspects in more detail.
	
	\begin{figure}[h]
		\centering
		\includegraphics[]{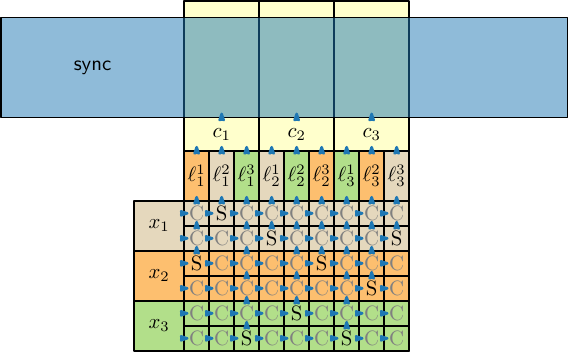}
		\caption{General grid structure used in our NP-hardness reduction.
			Here, we use the formula $\Phi = c_1 \land c_2 \land c_3$,
			where $c_1 = (\ell_1^1 \lor \ell_1^2 \lor \ell_1^3) = (x_2 \lor x_1\lor \neg x_3)$,
			$c_2 = (\ell_2^1 \lor \ell_2^2 \lor \ell_2^3) = (\neg x_1 \lor x_3 \lor x_2)$, and
			$c_3 = (\ell_3^1 \lor \ell_3^2 \lor \ell_3^3) = (\neg x_3 \lor \neg x_2 \lor \neg x_1)$.
			There are variable gadgets (left), clause and literal gadgets (top),
			split gadgets (S), crossing gadgets~(C),
			and a synchronization gadget (\textsf{sync})
			spanning over all clause gadgets (blue rectangle).
			The gadgets have various states and orientations.
			The blue arrows mark in which direction the truth state of a satisfied literal {\it can} be propagated.
			The colors in the rows, i.e., brown, orange, green,
			correspond to the three variables $x_1$, $x_2$, and~$x_3$,
			respectively.  The literal gadgets above have the same colors
			as the corresponding variables.}
		\label{fig:hardness-sketch}
	\end{figure}
	
	\paragraph{Overview.}
	We arrange the obstacles of~$P$ and the vertices and edges in~$\Gamma$
	and~$\Gamma'$ such that we obtain a grid-like structure where
	we have two rows for each variable (one for each literal of the variable)
	and three columns for every clause (one for each literal in the clause);
	see \cref{fig:hardness-sketch}.
	We then use several gadgets arranged within this grid-like structure.
	On the left side, the two rows of a variable
	terminate at a \emph{variable gadget}.
	A variable gadget is in one of the states
	\textsf{true}, \textsf{false}, or \textsf{unset}.
	On the top side, the three columns of a clause
	are joined via three \emph{literal gadgets}
	to a \emph{clause gadget}.
	Each literal gadget and each clause gadget
	is in one of the states \textsf{true} or \textsf{false}.
	All clause gadgets are joined by a \emph{synchronization gadget}.
	Within the column of each literal,
	we have a \emph{split gadget} in one of the two rows
	of the corresponding variable~$x_i$---in the upper row
	if the literal is $x_i$ and
	in the lower row if the literal is $\lnot x_i$.
	In all other grid cells, we have \emph{crossing gadgets}.
	Each of the split and crossing gadgets has four possible
	states,
	which we express in terms of both
	a horizontal orientation (\textsf{left}/\textsf{right})
	{\it and} a vertical orientation (\textsf{bottom}/\textsf{top}), i.e.,
	such a gadget can be oriented right and top, right and bottom,
	left and top, or left and bottom.
	
	The general idea behind the construction is as follows:
	the initial drawing $\Gamma$ and the final drawing
	$\Gamma'$ differ only in the synchronization gadget; the rest of
	the construction is identical.
	Transforming the synchronization gadget from its initial to
	its final state is possible if and only if all clause gadgets are in
	their \textsf{true} state at the same time -- both in $\Gamma$
	and $\Gamma'$ all clause gadgets are however in their
	\textsf{false} state.
	A clause gadget can be in (or transformed into) its
	\textsf{true} state if and only if at least one of its three literal
	gadgets is in its \textsf{true} state -- just like the clause
	gadgets, both in $\Gamma$ and $\Gamma'$ all literal gadgets
	are in their \textsf{false} state.
	A literal gadget can be in (or transformed into) its
	\textsf{true} state if and only if the state of the gadget of the
	variable of the literal is set accordingly (to \textsf{true} in case
	of positive literals and to \textsf{false} in case of negative
	literals).
	Both in $\Gamma$ and $\Gamma'$, the state of all variable
	gadgets is \textsf{unset}, but they can be morphed easily (and independent of each other) to
	achieve any of their other two states.
	However, since a variable gadget is not directly joined with all of its
	literal gadgets, the state of the variable gadget needs to be propagated
	to the correct literal gadgets.
	To this end, we use crossing and split gadgets, which
	we discuss in the following paragraph.
	
	A crossing or split gadget $\alpha$ can have the
	\textsf{right} horizontal orientation if and only if (1) the literal
	corresponding to its row is satisfied and (2) 
	all gadgets strictly between $\alpha$ and the variable
	gadget in this row also
	have the \textsf{right} horizontal orientation --
	intuitively, the information that the literal is satisfied is
	transported from the variable gadget in a rightwards direction to
	$\alpha$ (and all gadgets in between).
	If these conditions are satisfied and $\alpha$ is a split gadget,
	then $\alpha$ can also have the \textsf{top} vertical orientation.
	The vertical orientation works similar to the horizontal orientation;
	a crossing or split gadget $\alpha$
	that is located above the split gadget~$s$ of its column can
	have the \textsf{top} vertical orientation if and only if
	(1) $s$ has the \textsf{top} vertical orientation, and
	(2) all gadgets in the column strictly between $\alpha$ and $s$
	have the \textsf{top} vertical orientation.
	Intuitively, the split gadget needs to receive
	the information that the literal in its row is satisfied via its
	horizontal orientation in order to propagate this information in
	the upwards direction to $\alpha$ (and all gadgets in between).
	In particular, a split or crossing gadget that is directly below
	a literal gadget can have the vertical orientation \textsf{top} if
	and only if the literal corresponding to the literal gadget
	is satisfied.
	
	In summary, we can morph from $\Gamma$ to $\Gamma'$
	if and only if we can transform the synchronization gadget,
	which is possible if and only if all clause gadgets are in the
	\textsf{true} state at the same time, which is possible if and
	only if at least one literal gadget of each clause gadget is in the
	\textsf{true} state at the same time, which due to the
	propagation mechanism is possible if and only if the variables
	gadgets represent a satisfying truth assignment at the same
	time.
	(Notably, once the synchronization gadget has been transformed, it can remain in its altered state even if the other gadgets are morphed back to their states in $\Gamma$ and $\Gamma'$).
	
	We proceed by discussing the low-level details of our
	construction.

	\paragraph{Forbidden areas.}
	In each gadget, we have \emph{forbidden} areas where the vertices and
	edges cannot be drawn (henceforth drawn solid red).
	They are used to  create a system of tunnels and cavities
	in which the edges of our drawings are placed and move; see, e.g.,
	\cref{fig:nph-full-construction}. %
	We achieve this by populating the forbidden areas with obstacles, 
	arranged on a fine grid (as explained in more detail
	in the paragraph ``Running time''
	on page \pageref{par:number-obstacles}).
	
	\paragraph{Variable gadget.}
	The variable gadget has a comparatively simple structure;
	see \cref{fig:variable-gadget}.
	It has three vertices enclosed in a straight vertical tunnel of the forbidden area
	with one exit on the top right and one exit on the bottom right.
	Since the tunnel is straight, the middle vertex~$d$,
	which we call \emph{decision vertex}, is free (see
	\cref{sec:intro} for the definition of free and see the thick
	blue--white vertex~$x_3$ in \cref{fig:nph-full-construction} for
	the position of a free vertex in~$\Gamma$ and~$\Gamma'$).
	We say that the variable gadget is in the state \textsf{true} (\textsf{false})
	if $d$ is moved to the top (bottom) position,
	and it is in the state \textsf{unset} otherwise.
	Consequently, we can move the top (bottom) vertex out of the vertical tunnel
	if and only if we are in the state \textsf{true} (\textsf{false}).
	We call a vertex that can be moved from its original position within a gadget
	into an adjacent gadget a \emph{transferable} vertex.
	In the case of a variable gadget, one of the (black) neighboring vertices of~$d$
	becomes a transferable vertex if the variable gadget is in
	the state \textsf{true} or \textsf{false}
	(see \cref{fig:variable-gadget-true,fig:variable-gadget-false}).

    \begin{figure}[tbh]
        \centering
        \begin{subfigure}[t]{.3\linewidth}
            \centering
            \includegraphics[page=19]{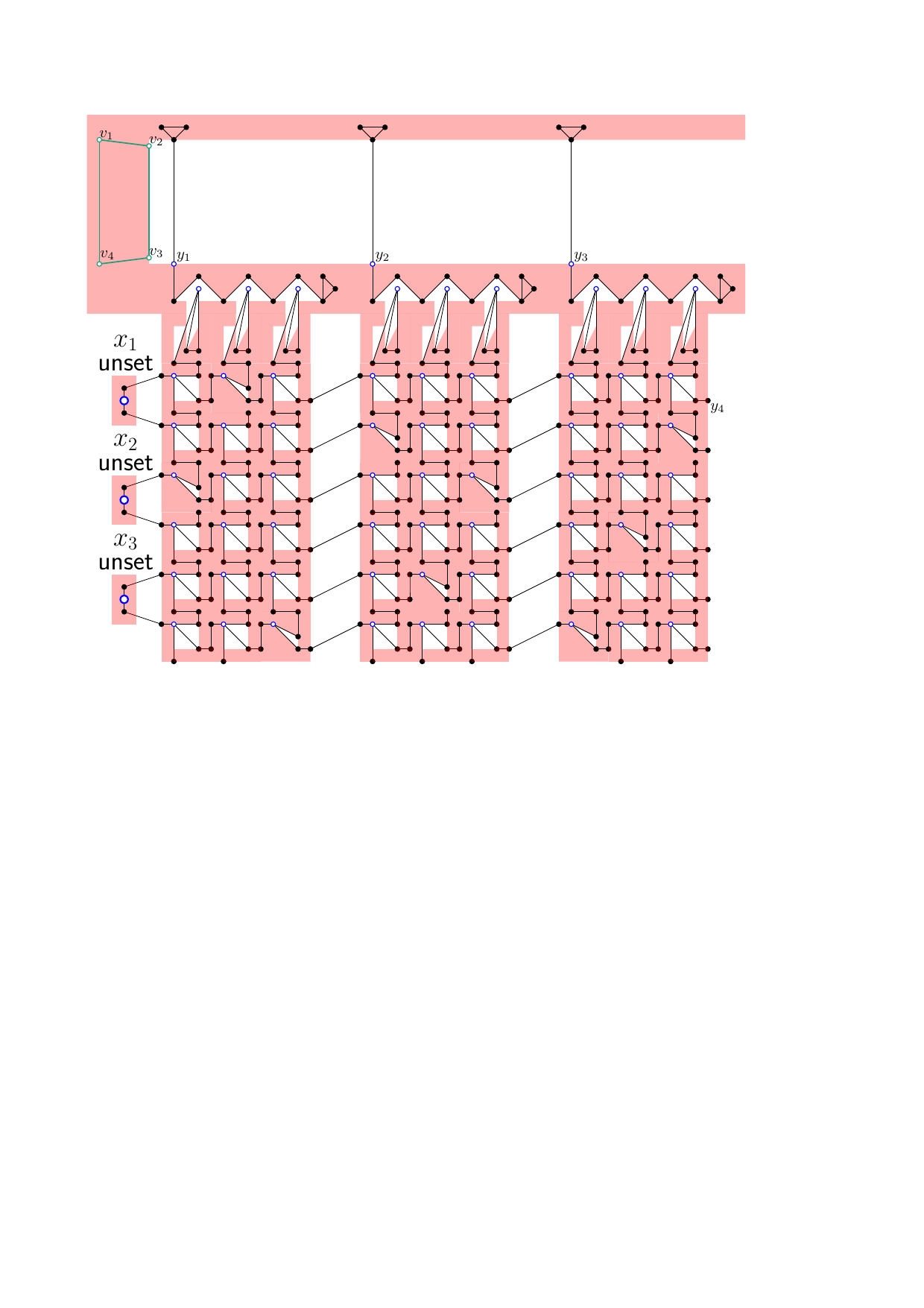}
            \caption{\textsf{unset}}
            \label{fig:variable-gadget-unset}
        \end{subfigure}
        \hfill
        \begin{subfigure}[t]{.3\linewidth}
            \centering
            \includegraphics[page=20]{hardness_gadgets}
            \caption{\textsf{true}}
            \label{fig:variable-gadget-true}
        \end{subfigure}
        \hfill
        \begin{subfigure}[t]{.3\linewidth}
            \centering
            \includegraphics[page=21]{hardness_gadgets}
            \caption{\textsf{false}}
            \label{fig:variable-gadget-false}
            \end{subfigure}
        
            \caption{Variable gadget (a)~in the state \textsf{unset}
              as it appears in $\Gamma$ and $\Gamma'$, (b)~in the state
              \textsf{true}, (c)~in the state \textsf{false}.}
        \label{fig:variable-gadget}
    \end{figure}

	\paragraph{Split gadget.}
	A split gadget consists of a central vertex $c$ of degree~3
	together with paths to the left, right, and top;
	see \cref{fig:copy-gadget}.	
	They are enclosed in a system of tunnels formed by the forbidden area.
	\cref{fig:copy-gadget-pulled-in} shows a split gadget~$\sigma$
	as it appears in~$\Gamma$ and~$\Gamma'$.
	If the crossing/split/variable gadget to the left of~$\sigma$,
	which shares the vertex~$l$ with~$\sigma$,
	has a transferable
	vertex, then $l$ can be moved to the next corner of the tunnel.
	Then, in turn, $c$ can be moved to the bottom right corner
	of the white (obstacle-free) triangle in~$\sigma$
	(see \cref{fig:copy-gadget-pushed-out}),
	and the two other neighbors of~$c$ can be pushed one position
	to the right and one position up,
	allowing their neighbors (i.e., the (orange) vertices $o$ and $o'$, respectively)
	to move to the next corner of the tunnel.
	In this case we say that the horizontal (vertical) orientation of~$\sigma$
	is \textsf{right} (\textsf{top}) since $o$ ($o'$)
	is moved into the gadget to the top (right) of~$\sigma$.
	The vertex $o$ ($o'$) becomes a transferable vertex.
	
	Otherwise, the horizontal
	(vertical) orientation of~$\sigma$ is \textsf{left} (\textsf{bottom}).
	This corresponds to \textit{not} forwarding the value \textsf{true}
	to the gadget on the right (top).

	\begin{figure}[tbh]
		\centering
		\begin{subfigure}[t]{.38\linewidth}
			\centering
			\includegraphics[page=6]{hardness_gadgets}
			\caption{Split gadget as it appears in $\Gamma$ and $\Gamma'$ with
				horizontal and vertical orientation \textsf{left} and \textsf{bottom}.}
			\label{fig:copy-gadget-pulled-in}
		\end{subfigure}
		\hfill
		\begin{subfigure}[t]{.58\linewidth}
			\centering
			\includegraphics[page=7]{hardness_gadgets}
			\caption{If the gadget on the left has horizontal
				orientation \textsf{right}, vertices can be pushed one position
				further and we can reach the orientations \textsf{right} and \textsf{top}.}
			\label{fig:copy-gadget-pushed-out}
		\end{subfigure}
		
		\caption{Split gadget: from a horizontal row transporting a truth value,
			we ``copy'' the same truth value up to a vertical column.}
		\label{fig:copy-gadget}
	\end{figure}

	\paragraph{Crossing gadget.}
	A crossing gadget has a similar structure as a split gadget.
	However, we now have a central vertex~$c'$ of degree~4
	with paths to the neighboring gadgets to the left, right, top, and bottom.
	In the center of the crossing gadget~$\chi$ in \cref{fig:crossing-gadget},
	\begin{figure}[tbh]
		\centering
		\begin{subfigure}{.24 \linewidth}
			\centering
			\includegraphics[page=8]{hardness_gadgets}
			\caption{\textsf{left} + \textsf{bottom}}
			\label{fig:crossing-gadget-no-no}
		\end{subfigure}
		\hfill
		\begin{subfigure}{.24 \linewidth}
			\centering
			\includegraphics[page=9]{hardness_gadgets}
			\caption{\textsf{right} + \textsf{bottom}}
			\label{fig:crossing-gadget-yes-no}
		\end{subfigure}
		\hfill
		\begin{subfigure}{.24 \linewidth}
			\centering
			\includegraphics[page=10]{hardness_gadgets}
			\caption{\textsf{left} + \textsf{top}}
			\label{fig:crossing-gadget-no-yes}
		\end{subfigure}
		\hfill
		\begin{subfigure}{.24 \linewidth}
			\centering
			\includegraphics[page=11]{hardness_gadgets}
			\caption{\textsf{right} + \textsf{top}}
			\label{fig:crossing-gadget-yes-yes}
		\end{subfigure}
		
		\caption{Crossing gadget: we transport truth values horizontally
			and vertically without influencing each other.
			The four possible combinations of orientations are illustrated.}
		\label{fig:crossing-gadget}
	\end{figure}
	there is a white (obstacle-free) square.
	In $\Gamma$ and $\Gamma'$, $c'$ is placed in the top left
	corner of this square (see \cref{fig:crossing-gadget-no-no}).
	If the gadget on the left of~$\chi$ has a transferable vertex,
	we can push the adjacent vertices to the next corners of the tunnel
	such that $c'$ can move to the bottom side of the square
	(see \cref{fig:crossing-gadget-yes-no,fig:crossing-gadget-yes-yes}).
	Only then can we push the vertices of the path leaving
	the gadget on the right side to the next corner of the tunnel.
	In this case, we have a transferable vertex, and we say
	that the horizontal orientation of the crossing gadget
	is \textsf{right}; otherwise it is \textsf{left}.
	Symmetrically, if the gadget below~$\chi$ has a transferable vertex,
	we can move $c'$ to the right side of the square
	(see \cref{fig:crossing-gadget-no-yes,fig:crossing-gadget-yes-yes}).
	Only then can we push the vertices of the path leaving~$\chi$
	through the top side to the next corner of the tunnel.
	In this case we have a transferable vertex and we say
	that the vertical orientation of the crossing gadget
	is \textsf{top}; otherwise it is \textsf{bottom}.
	Observe that the states of the gadgets on the left and below~$\chi$
	{\em independently} determine the horizontal and vertical
	orientation of~$\chi$.
	This property assures that we can transport information
	along routes that cross each other, but do not influence each other.
	We remark that not all crossing gadgets transmit information
	both to the right and to the top.
	If a crossing gadget is directly below a split gadget,
	then no information is pushed from the crossing to the split gadget
	as no vertex can be moved into a split gadget from below.
	The blue arrows in \cref{fig:hardness-sketch} indicate where
	information is pushed and where it is not.
	\Cref{fig:nph-full-construction} shows in more detail
	that there is no connection between a split gadget and
	the crossing gadget below.
	In such a case, the positions of the vertices in the crossing gadget that
	would lead to the gadget above are irrelevant for our reduction.

	\paragraph{Literal gadget.}
	\cref{fig:literal-gadget} shows a literal gadget~$\lambda$
	as it appears in~$\Gamma$ and~$\Gamma'$.
	It consists of five vertices, one of which,~$r$, is
	shared with the gadget~$\beta$ below; see \cref{fig:literal-gadget}.
	Only if the vertical orientation of~$\beta$ is \textsf{top},
	vertex~$r$ can move to the original position of~$s$ and~$s$ can move up.
	This in turn allows vertex~$t$ to move into the interior of~$\lambda$.
	In this case, we say that $\lambda$ has the state \textsf{true};
	otherwise, it has the state \textsf{false}.
	If a cycle in~$\Gamma$ or $\Gamma'$ contains obstacles,
	we call the cycle an \emph{anchor}.
	Observe that the anchor $\langle t, w, y \rangle$
	restricts the area where we can move~$t$.
	
	\begin{figure}[tbh]
		\centering
		\begin{subfigure}[t]{.41\linewidth}
			\centering
			\includegraphics[page=12]{hardness_gadgets}
			\caption{State \textsf{false} where vertex~$t$
				is above the forbidden area of the gadget.
				This is the literal gadget as it appears in~$\Gamma$
				and~$\Gamma'$.}
			\label{fig:literal-gadget-false}
		\end{subfigure}
		\hfill
		\begin{subfigure}[t]{.56\linewidth}
			\centering
			\includegraphics[page=13]{hardness_gadgets}
			\caption{State \textsf{true} where vertex~$t$
				is inside the cavity formed by the forbidden area of the gadget.
				This state can only be reached if the split or crossing
				gadget below has orientation \textsf{top}.}
			\label{fig:literal-gadget-true}
		\end{subfigure}
		
		\caption{Literal gadget: Depending on the crossing or split gadget below, it can be in two different states.
			Only in the state \textsf{true}, vertex $t$ does not pop out of the gadget.}
		\label{fig:literal-gadget}
	\end{figure}
	
	\paragraph{Clause gadget.}
	Each clause~$c_i$ ($i \in [m]$) is represented by a clause gadget,
	which consists of a path of length~9 whose endpoints are
	anchored by two 3-cycles; see \cref{fig:clause-gadget}.
	In the clause gadget as it appears in~$\Gamma$ and~$\Gamma'$
	(see \cref{fig:clause-gadget-false}),
	we have exactly one free vertex (denoted by~$y$,
	or, to indicate that it belongs to clause~$c_i$, denoted by~$y_i$),
	which is at the bottom of a large rectangular obstacle-free region,
	which is also part of the synchronization gadget.
	Observe that, within this area, we cannot move $y$
	(up to a tiny bit) to the left or right
	due to its neighbors lying at (essentially) fixed positions
	(see \cref{fig:clause-gadget-false}).
	
	\begin{figure}[tbh]
		\centering
		\begin{subfigure}[t]{.47\linewidth}
			\centering
			\includegraphics[page=16]{hardness_gadgets}
			\caption{The clause gadget is in the state \textsf{false}.}
			\label{fig:clause-gadget-false}
		\end{subfigure}
		\hfill
		\begin{subfigure}[t]{.47\linewidth}
			\centering
			\includegraphics[page=17]{hardness_gadgets}
			\caption{The clause gadget is in the state \textsf{true}.}
			\label{fig:clause-gadget-true}
		\end{subfigure}
		
		\caption{Clause gadget together with three literal gadgets:
			If at least one of the literal gadgets is in the state \textsf{true},
			the clause gadget is also in the state \textsf{true}.
			The obstacle-free region shared with the synchronization gadget is depicted in hatched green.}
		\label{fig:clause-gadget}
	\end{figure}
	
	However, if one of the incident literal gadgets is
	in the state \textsf{true},
	we can get a second free vertex by moving it onto the straight-line segment
	defined by its two neighbors (in \cref{fig:clause-gadget-false}, that
    vertex is called~$u$).
	Then, we can push, starting at that second free vertex
    (i.e., $u$ in \cref{fig:clause-gadget}), each vertex by
	one position to its successor until we have pushed~$z$ to the position
	where~$y$ is located in~$\Gamma$ and~$\Gamma'$
    (see \cref{fig:clause-gadget-true}).
	Now $y$ becomes a transferable vertex (we assume that we push~$y$
	out of the clause gadget into the synchronization gadget).
	We can move $y$ arbitrarily far to the right
	within this obstacle-free region (unless $y$ is blocked
	by the edges of another clause gadget).
	Only when this is done for all clause gadgets simultaneously,
	the synchronization gadget (see below) can be morphed as desired.
	Thus, we say that a clause gadget is in the state \textsf{true} if 
	at least one of its literal gadgets is in the state \textsf{true}; otherwise it is in the state \textsf{false}.

	\paragraph{Synchronization gadget.}
	The synchronization gadget is a 3-cycle $\langle v_1, v_2, v_3 \rangle$;
	see \cref{fig:nph-full-construction}.

	\begin{figure}[pt]
		\centering
		\includegraphics[page=18,scale=.95]{hardness_gadgets}
		\caption{Full construction for %
			$\Phi =
			(x_2 \lor x_1\lor \neg x_3)\land
			(\neg x_1 \lor x_3 \lor x_2)\land
			(\neg x_3 \lor \neg x_2 \lor \neg x_1)$.
			Gadgets with orientation \textsf{right}/\textsf{top}/\textsf{true}
			use thicker strokes.
			The triangle $T$ and the height
			of the clause gadgets is scaled down;
			the rest is scaled up.
            The blue box~$B$ indicates the
            free area where $T$ can rotate.
            The green bars and~$I$ indicate
            how the edges incident to~$y$
            must bend to leave $B$ empty.
            The green and yellow horizontal strips
            contain the gadgets of the
            unnegated and negated literals, respectively.
        }
		\label{fig:nph-full-construction}
	\end{figure}

	\afterpage{\clearpage}

		In $\Gamma$,
	this cycle is drawn as a isosceles triangle $T$ with base $v_2v_3$
	that has~$v_1v_2$ as a vertical edge and with~$v_3$
	to the left of this edge (the exact shape or size of $T$ is not
	relevant).
	In $\Gamma'$, this cycle is drawn as a shifted version
	(see \cref{sec:intro} for the definition) of~$T$,
	which we call $T'$.
	Apart from this, $\Gamma$ and $\Gamma'$ are identical.
	The drawing $T$ is blocked from $T'$ due to \cref{prop:block-label-shift-C3-2}
	(we use the same construction as in the proof):
	inside of~$T$, we have obstacles $a'$, $b'$, and~$c'$
	arranged with $\varepsilon$ distance next to $v_1$, $v_2$,
	and $v_3$, respectively.
	Outside of~$T$ and with $\varepsilon$ distance
	to the midpoint of the edge $v_1v_3$,
	we have an obstacle~$d'$.
	According to \cref{prop:block-label-shift-C3-2},
	it suffices to set $\varepsilon$ to some constant at most $s/12$
	where $s$ is the length of the edge $v_2v_3$ in $\Gamma$.
	The only difference to the construction in \cref{prop:block-label-shift-C3-2}
	is that we do not have an obstacle $e'$,
	but we emulate the functionality of $e'$ by edges of
	the clause gadgets:
	$T$ and the obstacles~$a'$, $b'$, $c'$, and $d'$ are located in the large rectangular
	obstacle-free region of the clause gadget such that $v_1v_2$ lies directly to the left
	of the edge of the clause gadget that has $y_1$ as its lower endpoint
	(initially).
	We scale the rest of our construction such that the midpoint of the
	edge $v_1v_2$ is within $\varepsilon$ distance to all the edges of the clause gadget that have one of the vertices~$y_i$ as their lower endpoint
	(initially).\footnote{%
	Note that even if a clause gadget is in the state \textsf{false},
	we have a little flexibility in morphing the edges incident to~$y$
	due to the width of the tunnel in which the edge~$yz$ lies.
	When scaling the construction to have the edges incident to~$y$
	in $\varepsilon$ distance to the midpoint of $v_1v_2$,
	we account for this little extra slack.}
	We emphasize that (the coordinates of) $T$, the obstacles~$a'$, $b'$, $c'$, $d'$, and the distance~$\varepsilon$
	do not depend on~$\Phi$ (i.e., we can use the same $T$, $a'$, $b'$, $c'$, $d'$ for every given Boolean formula~$\Phi$).
	Hence, we can hard-code them into our reduction algorithm (the
	size of their encodings is constant);
	cf.\ \cref{prop:block-label-shift-C3-2}.
	By \cref{prop:block-label-shift-C3-one-outer},
	there exists an obstacle-avoiding planar morph~$M$
	from $T$ to~$T'$ with respect to the obstacle set formed by only the four obstacles $a'$, $b'$,
	$c'$, and $d'$.
	Let $B$ be an axis-aligned rectangle (or bounding box) that contains
	all drawings of~$M$ in its interior.
	The height of the large rectangular obstacle-free region of the
	synchronization gadget
	is chosen such that the region contains~$B$.
	Given that $T$, $T'$, $a'$, $b'$, $c'$, and $d'$ do not depend on~$\Phi$,
	it follows that $M$, and~$B$ also do not depend on~$\Phi$.
	Thus, the positions of the corners of~$B$ have constant-sized coordinates,
	which can again be hard-coded into our reduction algorithm.
	
	\paragraph{Correctness.}
    We first show a key property of the synchronization gadget.
    If $\Gamma$ can be morphed such that no edge or vertex except for~$T$
    lies inside $B$, then $T$ can be morphed to the shifted version~$T'$,
    and, after that, the first step of morphing $\Gamma$ can be reversed
    so that the final drawing coincides with~$\Gamma'$.
    Due to the placement of $T$, $a'$, $b'$, $c'$, and $d'$,
    this is possible if and only if
    all of the clause gadgets are in the state \textsf{true}:
    in this case, for each clause~$c_i$, we can move the vertex~$y_i$
    in the clause gadget of~$c_i$
    into a region~$I$ on the far right side
    so that the edges incident to~$y_i$ do not intersect $B$.
    In contrast, if a clause gadget is in the state \textsf{false},
    at least one of the edges incident to its vertex~$y_i$
    contains a point of distance at most~$\varepsilon$
    to the midpoint of $v_1v_2$,
    which means that we cannot morph $T$ to $T'$
    by \cref{prop:block-label-shift-C3-2}.
    
	Now, if~$\Phi$ admits a satisfying
	truth assignment, we can describe a morph from
	$\Gamma$ to $\Gamma'$ by moving the decision vertices in the variable
	gadgets according to the truth assignment,
	which allows to transport these truth values via the split and crossing
	gadgets to the literal and the clause gadgets.
	As all clause gadgets can reach the state \textsf{true} at the same time,
	we can morph the drawing of the 3-cycle in the synchronization gadget to its shifted version and then move all
	other vertices back to their original position.
	(Note that this morph can also be turned into a piecewise linear one by \cref{obs:piecewise}.)
	
	For the other direction, if a morph from $\Gamma$ to $\Gamma'$ exists,
	then we know by our construction that
	at some point, all clause gadgets were in the state \textsf{true} simultaneously (as this is necessary to morph the drawing of the 3-cycle in the synchronization gadget to its shifted version),
	which means the variable gadgets
	represent at the same time a satisfying truth assignment for~$\Phi$.

	\paragraph{Running time.}
	\label{par:number-obstacles}
	As stated above, we realize the forbidden areas in our gadgets by
	populating them with obstacles placed on a fine grid.
	For each gadget type $\mathcal T$ other than the synchronization gadget,
	there clearly exists a grid that is fine enough to ensure the desired
	functionality of the gadget. Essentially the required resolution of the grid
	is a function in the width of the tunnels used in $\mathcal T$. Notably,
	the design of $\mathcal T$ (in particular, the width of the tunnels and, thus, the
	resolution) does not depend on $\Phi$. Hence, the number of obstacles
	used in~$\mathcal T$ is $O(1)$ and there is a constant-time algorithmic
	subroutine that creates~$\mathcal T$ at some fixed location in the plane
	(for each gadget type~$\mathcal T$).
	By extension, there exists a polynomial-time algorithm
	(note that the overall number of used gadgets is $O(nm)$) that creates the
	desired grid-like arrangement of translated copies of gadgets that make up
	the \emph{bottom part} (everything below the synchronization gadget) of our
	construction.
	Regarding the synchronization gadget, recall that
	the coordinates of the vertices of $T$ and the obstacles $a'$, $b'$, $c'$, and $d'$,
	as well as the distance~$\varepsilon$ and the corners of~$B$ also do not
	depend
	on~$\Phi$ and can, thus, be harded-coded (determined in constant time)
	into the reduction algorithm.
	To the right side of the triangle~$T$,
	we slice a vertical strip of (constant) width $\varepsilon$
	into $m$ vertical substrips---one for each clause gadget.
	The bottom part of our construction is now scaled as stated in the
	description of the synchronization gadget, which decreases the
	the width and height of each gadget in the bottom part to $\Theta(1/m)$,
	which can again be done in polynomial time (recalling that the number
	of gadgets is $O(nm)$).

	\paragraph{Connectivity.}
	So far, the graph in our reduction has $\Omega(m)$ connected components:
	a large connected component comprising
	all variable, split, crossing and literal gadgets,
	a connected component for all clause gadgets,
	as well as another connected component for the synchronization gadget.
	We can merge these components by adding edges without influencing
	the behavior of our gadgets; see \cref{fig:nph-full-construction}:
	We can add a constant-length path between the vertices
	labeled~$v_1$ and~$q_1$ (this leaves sufficiently much freedom
	for $T$ to morph to $T'$) and another constant-length path
	between the vertices labeled~$q_2$ and~$q_3$ (near the upper
	blue arrow in \cref{fig:nph-full-construction}).
\end{proof}

\begin{remark}[Bounded number of obstacles per face]
	In our NP-hardness construction, even after connecting the graph,
	there are faces containing a large number of obstacles.
	Now one might wonder whether the problem becomes easier if we
	have a small constant number of obstacles per face.
	However, this is not the case because
	in a slightly adjusted NP-hardness construction,
	every face has at most one obstacle:
	enclose every obstacle of the current construction by
	a separate 3-cycle and connect this 3-cycle via a sufficiently
	long path to one of the vertices of the considered face.
	Now, clearly, every face has at most one obstacle and
	these new parts do not restrict the functionality
	of the gadgets because we have added them
	via sufficiently long paths.
\end{remark}

\section{Open Problems}
\label{sec:conclusion}

\begin{enumerate}
	\item In general, the decision problem whether, for two crossing-free
    straight-line drawings of the same graph, there exists an
    obstacle-avoiding crossing-free morph is NP-hard (\cref{thm:NP-hard*}).
    However, it can be solved efficiently for forests (\cref{lem:trees}).
	Are there other meaningful graph classes where this is the case?
	In particular, what about cycles or triangulations?
	It is conceivable that the latter case is actually easier since the
	placement of obstacles that are compatible with the two given drawings
	is quite limited.
	Regarding cycles, we emphasize that the existence of a
	free vertex is not a sufficient condition for the existence of
	a morph (cf. \cref{fig:dense-fox}).
	\item Does the problem lie in NP? Is it $\exists\mathbb{R}$-hard?
	\item Our NP-hardness reduction in the proof of
          \cref{thm:NP-hard*} uses a large number of obstacles.  On
          the other hand, \cref{prop:1or2-obstacles} says that every
          plane graph admits an obstacle-avoiding morph from one
          drawing to another if the number of obstacles (compatible
          with the two drawings) is at most~2.  So for how many
          obstacles does the problem become NP-hard?
	\item The drawings $\Gamma$ and $\Gamma'$ produced by our reduction are identical except for the position of three vertices.
	Does the problem become easier when only up to two vertices may change positions?
	\item It is easy to observe that if our reduction is applied to a satisfiable formula~$\Phi$
	with $n$ variables and $m$ clauses, there is a piecewise linear planar obstacle-avoiding morph between the produced drawings $\Gamma$ and $\Gamma'$ with $\Theta(n+m)$ steps, which is also necessary.
	Note that this number depends on the size of the output of the
        reduction.  This motivates the following family of questions.
	Let~$k$ be an arbitrary fixed constant.
	Given two planar straight-line drawings~$\Gamma$ and~$\Gamma'$ of the
	same plane graph and a set of obstacles compatible with~$\Gamma$
	and~$\Gamma'$, decide whether there exists a piecewise linear planar
	obstacle-avoiding morph from $\Gamma$ to $\Gamma'$ with at most $k$ steps.
	For which values of~$k$ can this decision problem be answered efficiently?
	\item Given two drawings of the same plane graph, how many compatible 
	obstacles are necessary and sufficient to block them?
	Can this be computed efficiently?
        
      \item Lubiw and Petrick \cite{lp-mpgdbe-JGAA11} studied the
        problem of morphing drawings of graphs where edges can be
        drawn as polygonal chains, that is, edges are allowed to bend.
        Does morphing in the presence of obstacles remain NP-hard if
        we allow a certain (say, constant) number of bends per edge?
        This can be seen as a special case of our problem where the
        input is restricted to graphs whose edges have been subdivided
        a certain number of times.
\end{enumerate}

\bibliographystyle{plainurl} %
\bibliography{cas-morphing}

\end{document}